\documentclass{article}
\pagestyle{plain}
\usepackage[letterpaper, portrait, margin=1in]{geometry}
\usepackage[utf8]{inputenc}
\usepackage{graphicx, caption}
\usepackage{amsfonts}
\usepackage{bm}
\usepackage{cite}
\usepackage{IEEEtrantools}
\usepackage{amssymb}
\usepackage{amsthm}
\usepackage{color}
\usepackage{comment}
\usepackage{mathtools}
\usepackage{braket}
\usepackage{enumitem}
\usepackage{hyperref}
\usepackage{balance}
\usepackage{flushend}
\usepackage{mathrsfs}
\usepackage{setspace}
\usepackage{graphicx} % Required for inserting images
\usepackage{amsmath}
\usepackage{algorithmicx}
\usepackage{algorithm}
\usepackage[noend]{algpseudocode}
\usepackage{cleveref}
\usepackage{adjustbox}
\usepackage{subcaption}
\usepackage{lscape}
\usepackage{tikz}
\usepackage{pgfplots}
\usepackage{authblk}

\pgfplotsset{compat = newest}

\newcommand{\E}{\mathbb{E}}
\newcommand{\R}{\mathbb{R}}

\newcommand{\C}{\mathbb{C}}
\newcommand{\Var}{\textnormal{Var}}
\newcommand{\tr}{\textnormal{tr}}

\algnewcommand\algorithmiclet{\textbf{Let}}
\algnewcommand\Let{\item[\algorithmiclet]}
\algnewcommand{\LineComment}[1]{\State \(\triangleright\) #1}

\newtheorem{theorem}{Theorem}
\newtheorem{corollary}[theorem]{Corollary}
\newtheorem{lemma}[theorem]{Lemma}

\theoremstyle{definition}
\newtheorem{definition}{Definition}
\newtheorem{example}{Example}

\newtheorem*{problem}{Problem}

\title{Analyzing and improving a classical Betti number estimation algorithm}
\author{Julien Sorci}
\affil{\textit{Quantinuum, 1300 N 17th Street, Arlington, VA 22209
USA}}

\begin{document}
\maketitle

\begin{abstract}
    Recently, a classical algorithm for estimating the normalized Betti number of an arbitrary simplicial complex was proposed. Motivated by a quantum algorithm with a similar Monte Carlo structure and improved sample complexity, we give a more in-depth analysis of the sample complexity of this classical algorithm. To this end, we present bounds for the variance of the estimators used in the classical algorithm and show that the variance depends on certain combinatorial properties of the underlying simplicial complex. This new analysis leads us to propose an improvement to the classical algorithm which makes the ``easy cases easier'', in that it reduces the sample complexity for simplicial complexes where the variance is sufficiently small. We show the effectiveness and limitations of these classical algorithms by considering Erd\H{o}s-Renyi random graph models to demonstrate the existence of ``easy" and ``hard" cases. Namely, we show that for certain models our improvement almost always leads to a reduced sample complexity, and also produce separate regimes where the sample complexity for both algorithms is exponential. 
\end{abstract}

\section{Introduction}

A simplicial complex is a mathematical object which captures complex relationships between points. Typically regarded as a geometric object, practitioners are often interested in computing topological invariants of a simplicial complex in order to identify features or distinguish outlying datasets. One such feature is the $k^{th}$ Betti number, which is a topological invariant that counts the number of $k$-dimensional holes. 

Betti number estimation is widely believed to be classically difficult: In \cite{Schmidhuber2023} it was shown that computing Betti numbers exactly is \#P-hard, and estimating Betti numbers up to multiplicative error is NP-hard. In general, the $k^{th}$ Betti number can be expressed as the dimension of the kernel of the $k^{th}$ order combinatorial Laplacian matrix \cite{Friedman1998}, which can have size exponential in the number of vertices when the dimension $k$ is non-constant. Therefore computing Betti numbers in high dimension likely requires diagonalizing an exponentially large matrix. 

In the quantum computing community, Betti number estimation received significant attention when Lloyd et al. \cite{Lloyd2016} presented a quantum algorithm for estimating the \textit{normalized} Betti number of the clique complex of a graph, which is the Betti number divided by the number of simplices in that dimension. Under certain assumptions their algorithm runs in polynomial-time, suggesting that there may be a regime where normalized Betti number estimation for clique complexes obtains super-polynomial quantum advantage. Since then, several other quantum algorithms for this problem have been proposed, and there has been significant focus on determining the specific regimes when there is super-polynomial quantum advantage \cite{Berry2024, Crichigno2022, McArdle2022}. 

Our focus is on a particular classical algorithm for estimating normalized Betti numbers proposed by Apers et al. in \cite{Apers2023}. To the best of our knowledge this appears to be the best classical algorithm for estimating normalized Betti numbers in high dimension with low space complexity. One important motivation for studying this particular classical algorithm is that, recently, a quantum algorithm for the same problem with a very similar Monte Carlo structure was proposed in \cite{Akhalwaya2024}. The sample complexity of this quantum algorithm is exponentially smaller than the upper bound of the sample complexity of the classical algorithm given in \cite{Apers2023}, although it is unclear if this upper bound is the best possible. Our goal is to analyze this classical algorithm and determine if there are tighter bounds on the sample complexity. By specifically focusing on these two similar algorithms, our hope is to illuminate the specific problem characteristics that result in a sample complexity separation. 

%This is equivalent to estimating the normalized Betti number, which is the Betti number divided by the number of simplices in that dimension. The algorithm has the structure of a Monte Carlo path integral, which samples a path of simplices according to a particular Markov chain, and then applies a function related to the combinatorial Laplacian to that path. To the best of our knowledge this appears to be the best classical algorithm for estimating Betti numbers in high dimension with low space complexity. While generic upper bounds on this algorithm's time complexity were given in \cite{Apers2023}, it is unclear how tight they are, and if the structure of the underlying simplicial complex plays a role in the sample complexity.

We make several contributions to this end. First, we analyze the classical Betti number estimation (\texttt{CBNE}) algorithm of \cite{Apers2023} and give a more refined analysis of its sample complexity. To do so, we give upper and lower bounds for the variance of the Monte Carlo estimators used there. Our variance bounds are phrased in terms of combinatorial properties of the simplicial complex and indicate the types of simplicial complexes we can expect \texttt{CBNE} to require a large number of samples. As a result of this analysis, we propose an improved algorithm that we call \texttt{CBNE-Var} which reduces the sample complexity when the variance is sufficiently small. Secondly, we produce extremal examples where the variance is small or large by considering several Erd\H{o}s-Renyi random graph models. We show that in certain regimes the variance is sufficiently small and \texttt{CBNE-Var} attains a polynomial improvement in the sample complexity when compared to the original \texttt{CBNE} algorithm. We describe separate regimes where the variance of the Monte Carlo estimator is exponential and thus the sample complexity of both \texttt{CBNE} and \texttt{CBNE-Var} are exponential. 

\begin{table}
\centering
\begin{tabular}{|c|c|c|} 
 \hline
  Algorithm & General Complex & Clique Complex \\
 \hline
 \texttt{CBNE} \cite{Apers2023} & $\mathcal{O}\Big(\frac{1}{\epsilon^2}n^{2\ell}\Big)$ & $\mathcal{O}\Big(\frac{1}{\epsilon^2}4^{\ell}\Big)$ \\
 \texttt{CBNE-Var} (this paper) &  $\mathcal{O}\Big( \frac{1}{\epsilon^2} \Big( n^{4\ell/3} +  \frac{1}{|S_k|} \sum_{\sigma \in S_k} \|H\ket{\sigma}\|_1^{2\ell} \Big) \Big)$ & $\mathcal{O}\Big( \frac{1}{\epsilon^2} \Big( 2^{4\ell/3} +  \frac{1}{|S_k|} \sum_{\sigma \in S_k} \|H\ket{\sigma}\|_1^{2\ell} \Big) \Big)$ \\
 \hline
\end{tabular}
\caption{A comparison of the sample complexity of \texttt{CBNE} and \texttt{CBNE-Var} when estimating the normalized Betti number of a general simplicial complex or clique complex with $n$ vertices to accuracy $\epsilon$. Here we assume that the failure probability $\eta$ is constant, and hence does not appear in the asymptotic expressions. See Section~\ref{sec:CBNE_analysis} for more details.}
\label{tab:alg_comparison}
\end{table}

The rest of the paper is organized as follows. In Section~\ref{sec:background} we review the necessary background on Betti number estimation and the \texttt{CBNE} algorithm of \cite{Apers2023}. In Section~\ref{sec:CBNE_analysis} we present our bounds on the variance of the \texttt{CBNE} algorithm and present our proposed improvements in Section~\ref{sec:improved_alg}. Following this, we consider clique complexes of Erd\H{o}s-Reyni graphs in Section~\ref{sec:er_graphs}.

\section{Background}
\label{sec:background}

We review here the necessary background in algebraic topology. For more details we direct the reader to \cite{Lim2020, Hatcher2002}.  A \textit{simplicial complex} over a set of vertices $\{x_1,...,x_n\}$ is a collection $\Gamma$ of subsets of $\{x_1,...,x_n\}$ that is closed under subsets, meaning $\sigma \in \Gamma$ and $\tau \subseteq \sigma$ implies $\tau \in \Gamma$. The elements of a simplicial complex $\Gamma$ are called \textit{simplices} and a simplex $\sigma \in \Gamma$ of size $k+1$ is called a $k$-\textit{simplex}. An important example of a simplicial complex is a clique complex. Given a graph $G$ the \textit{clique complex} of $G$ is the simplicial complex whose simplices are the subsets of vertices of $G$ which form a complete subgraph in $G$.  

We denote the set of $k$-simplices in an arbitrary simplicial complex $\Gamma$ by $S_k$. The \textit{up-degree} of a $k$-simplex $\sigma$ is the number of $(k+1)$-simplices which contain $\sigma$, and is denoted by $d_{\textnormal{up}}(\sigma)$. Furthermore, we define the $k^{th}$ \textit{boundary map} as the mapping $\partial_k : \C S_k \rightarrow \C S_{k-1}$ sending a $k$-simplex $\ket{\{x_{i_0},...,x_{i_k}\}}$ with $i_0 < i_1 < ... < i_k$ to
\begin{equation*}
    \partial_k \ket{\{x_{i_0},...,x_{i_k}\}} = \sum_{j=0}^k (-1)^{j+1} \ket{\{x_{i_0},...,x_{i_k}\} \setminus \{x_{i_j}\}},
\end{equation*}
and extended by linearity to $\C S_k$. A basic result in algebraic topology shows that $\partial_k \circ \partial_{k+1} = 0$ \cite[Lemma 2.1]{Hatcher2002}, meaning that $\textnormal{Im}(\partial_{k+1}) \subseteq \ker(\partial_k)$, and therefore the quotient $\ker(\partial_k ) / \textnormal{Im}(\partial_{k+1})$ is well-defined, and called the $k^{th}$ \textit{homology group}. The $k^{th}$ \textit{Betti number} $\beta_k$ is then defined to be the dimension of this as a vector space, that is,
\begin{equation*}
    \beta_k = \dim\Big( \ker(\partial_k ) / \textnormal{Im}(\partial_{k+1})\Big),
\end{equation*}
and is a quantitative expression for the number of $k$-dimensional holes in $\Gamma$. Similarly, the \textit{normalized Betti number} is defined as $\beta_k / |S_k|$. For the remainder of the paper we will only be concerned with estimating the normalized Betti number by defining an estimator $\hat{\beta}_k$ of $\beta_k / |S_k|$. The quality of this estimator will be characterized as follows. 
\begin{definition}
    We say that a random variable $X$ is an $(\epsilon, \eta)$-\textit{estimate} of $\mu \in \R$ if it satisfies
\begin{equation}
\label{eq:betti_estimate}
    \Pr\Big[ \big| X - \mu \big| > \epsilon \Big] < \eta.
\end{equation}
\end{definition}    
For each of the algorithms we consider we will require that the estimator $\hat{\beta}_k$ is an $(\epsilon, \eta)$-estimate. The $k^{th}$ \textit{combinatorial Laplacian} is defined as the mapping $\mathcal{L}_k : \C S_k \rightarrow \C S_k$
\begin{equation}\label{eqn:kLaplacian_boundary}
    \mathcal{L}_k = \partial_k^\dagger \partial_k + \partial_{k+1} \partial_{k+1}^\dagger,
\end{equation}
and we denote the matrix of $\mathcal{L}_k$ with respect to the basis of characteristic vectors of $S_k$ by $\Delta_k$. As a consequence of the Hodge theorem \cite{Eckmann1944} the $k^{th}$ Betti number is equal to the nullity of $\Delta_k$ and therefore can be estimated by analyzing the spectrum of the Laplacian matrix. The entries of $\Delta_k$ have an explicit combinatorial formula \cite[Theorem 3.3.4]{Goldberg2002} which will be important for proving our variance bounds. 

\begin{theorem}
\label{thm:lap_entries}
    Let $\Gamma$ be a finite dimensional oriented simplicial complex. For any $k > 0 $ and $\sigma, \tau \in S_k$ the $(\sigma, \tau)$ entry of $\Delta_k$ is
\begin{equation}
    (\Delta_k)_{\sigma \tau} = \begin{cases}
        d_{\textnormal{up}}(\sigma) + k + 1, & \sigma = \tau \\
        \pm 1, & |\sigma \cap \tau| = k \textnormal{ and } \sigma \cup \tau \notin \Gamma \\
        0, & \textnormal{else}
    \end{cases}
\end{equation}
\end{theorem}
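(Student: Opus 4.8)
The plan is to split $\Delta_k$ into its \emph{down} and \emph{up} parts and to compute the entries of each against the basis of characteristic vectors of $S_k$. Write $D_k$ for the matrix of $\partial_k$ and $B_{k+1}$ for the matrix of $\partial_{k+1}$ in the relevant bases; since all structure constants are real we have $\Delta_k = D_k^{\mathsf T} D_k + B_{k+1} B_{k+1}^{\mathsf T}$, so I would analyze $\mathcal{L}_k^{\downarrow} := D_k^{\mathsf T} D_k$ and $\mathcal{L}_k^{\uparrow} := B_{k+1} B_{k+1}^{\mathsf T}$ separately and then add.

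For the down part, $(\mathcal{L}_k^{\downarrow})_{\sigma\tau} = \langle \partial_k \ket{\sigma}, \partial_k \ket{\tau}\rangle$, so I would expand both boundaries using the displayed formula for $\partial_k$ and count common $(k-1)$-faces of $\sigma$ and $\tau$ with signs. A cardinality argument handles this cleanly: a common face is a size-$k$ subset of both $\sigma$ and $\tau$, hence a size-$k$ subset of $\sigma\cap\tau$, so there is no common face unless $|\sigma\cap\tau|\ge k$; when $\sigma=\tau$ the $k+1$ faces each contribute $+1$; and when $|\sigma\cap\tau|=k$ there is exactly one common face, namely $\sigma\cap\tau$, contributing the product of its two (possibly unequal) orientation signs, i.e.\ $\pm1$. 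This gives diagonal value $k+1$, off-diagonal value $\pm1$ exactly when $|\sigma\cap\tau|=k$, and $0$ otherwise. For the up part I would first record that $\partial_{k+1}^{\mathsf T}\ket{\sigma} = \sum_{\rho} \pm\ket{\rho}$, summed over $(k+1)$-simplices $\rho \supset \sigma$, so that $(\mathcal{L}_k^{\uparrow})_{\sigma\tau}$ counts common cofacets of $\sigma$ and $\tau$ with signs. Counting again: $\sigma$ has $d_{\textnormal{up}}(\sigma)$ cofacets, giving the diagonal; a common cofacet must contain $\sigma\cup\tau$, which has size $k+2$ exactly when $|\sigma\cap\tau|=k$ and size larger otherwise, so for $\sigma\ne\tau$ there is at most one common cofacet, it exists iff $|\sigma\cap\tau|=k$ and $\sigma\cup\tau\in\Gamma$, and then it contributes $\pm1$.

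Adding the two contributions yields the diagonal entry $d_{\textnormal{up}}(\sigma)+k+1$ immediately, the entry $0$ whenever $|\sigma\cap\tau|<k$, and for $|\sigma\cap\tau|=k$ an entry equal to (a $\pm1$ from the down part) plus (a $\pm1$ from the up part when $\sigma\cup\tau\in\Gamma$, else $0$). The one genuine computation — and the step I expect to be the main obstacle — is showing that when $\sigma\cup\tau\in\Gamma$ these two signs are opposite, so that this entry vanishes and a nonzero $\pm1$ survives only when $\sigma\cup\tau\notin\Gamma$. I would do this by fixing the induced orientation: list the vertices of $\rho=\sigma\cup\tau$ in increasing order and let $a,b$ be the two vertices in which $\sigma$ and $\tau$ differ, occurring at positions $p<q$. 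A direct expansion of $\partial_{k+1}\ket{\rho}$ shows the up contribution is $(-1)^{p+q}$, while expanding $\partial_k\ket{\sigma}$ and $\partial_k\ket{\tau}$ and accounting for the index shifts caused by deleting a vertex shows the down contribution is $(-1)^{p+q+1}$; their sum is $0$.

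Finally I would note that the diagonal value and the zero pattern are independent of the chosen orientation, and only the sign of the surviving off-diagonal entries depends on it, which is why the statement can be phrased with a bare ``$\pm1$''. This completes the proof of the claimed formula.
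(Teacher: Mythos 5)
Your proof is correct. Note that the paper itself does not prove this statement at all: it is quoted from the literature (Goldberg, Theorem~3.3.4), so there is no in-paper argument to compare against; you are supplying the standard proof that the citation stands in for. Your decomposition into the down part $\partial_k^{\dagger}\partial_k$ and up part $\partial_{k+1}\partial_{k+1}^{\dagger}$, the counting of common faces and common cofacets, and the identification of the single critical step --- the sign cancellation when $\sigma\cup\tau\in\Gamma$ --- are all accurate. Your sign bookkeeping also checks out against the paper's (slightly nonstandard) convention $(-1)^{j+1}$ in the definition of $\partial_k$: with $\rho=\sigma\cup\tau=\{v_0<\dots<v_{k+1}\}$ and the differing vertices at positions $p<q$, the up contribution is $(-1)^{p+1}(-1)^{q+1}=(-1)^{p+q}$ and the down contribution is $(-1)^{p+1}(-1)^{q}=(-1)^{p+q+1}$ (the shift by one coming from deleting the earlier vertex), so the two cancel exactly when the cofacet exists; since only products of signs enter, the global sign choice in $\partial_k$ is immaterial, as you observe. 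Your closing remark that the diagonal and the zero pattern are orientation-independent, while only the sign of the surviving $\pm1$ entries depends on the orientation, is the right way to justify the bare ``$\pm1$'' in the statement.
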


We now sketch the basic ideas around the normalized Betti number estimation algorithm of \cite{Apers2023}. The general idea is to express the normalized Betti number as the trace of a matrix related to the Laplacian and then use a Monte Carlo path integral to estimate its trace. More precisely, the normalized trace of the matrix $H^\ell := (I - \frac{1}{n}\Delta_k)^\ell$ approaches the normalized Betti number as $\ell \rightarrow \infty$, and therefore it suffices to estimate this normalized trace with $\ell$ chosen sufficiently large \footnote{In \cite{Apers2023}, Apers et al. consider the more general matrix $H = I - \frac{1}{\hat{\lambda}}\Delta_k$ where $\hat{\lambda}$ is an upper bound on the largest eigenvalue of $\Delta_k$. In general, the largest eigenvalue of $\Delta_k$ is always at most $n$. We restrict ourselves to the case where $\hat{\lambda}=n$ since the results apply to an arbitrary simplicial complex. Additionally, in this case there exists an analogous quantum algorithm and therefore we may directly compare these algorithms under the same assumptions.} In fact, it was shown in \cite{Apers2023} that it suffices to take $\ell \geq \frac{1}{\delta} \log(1/\epsilon)$ where $\delta$ is the spectral gap of $\Delta_k$ and $\epsilon$ is the desired precision. However, there appears to be little known about the spectral gap of $\Delta_k$ for an arbitrary simplicial complex, hence in practice we must simply choose the path length $\ell$ as large as computationally feasible. Expanding the matrix product $H^\ell$ using the definition of matrix multiplication, this normalized trace can be expressed as the sum
\begin{equation}
    \frac{1}{|S_k|}\tr(H^\ell) = \frac{1}{|S_k|} \sum_{(\sigma_0,...,\sigma_\ell)} \braket{\sigma_\ell|\sigma_0} \prod_{i=0}^{\ell-1} (-1)^{s(\sigma_i, \sigma_{i+1})} \|H\ket{\sigma_i} \|_1 \frac{|\bra{\sigma_{i+1}}H \ket{\sigma_i}|}{\|H\ket{\sigma_i} \|_1},
\end{equation}
where $(-1)^{s(\sigma_i, \sigma_j)}$ with $s(\sigma_i, \sigma_j) \in \{0,1\}$ is the sign of $\bra{\sigma_i}H\ket{\sigma_j}$. The above sum can then be interpreted as an expectation of a particular random variable and then estimated through Monte Carlo sampling. Write $\Omega$ to denote the set $\{ \overline{\sigma}=(\sigma_0,...,\sigma_\ell): \sigma_i \in S_k\}$, which is the set of length $\ell$ paths of $k$-simplices of $\Gamma$. Then define a function $f_k: \Omega \rightarrow \R$ by 
\begin{equation}
    f_k(\overline{\sigma}) = \braket{\sigma_\ell|\sigma_0} \prod_{i=0}^{\ell-1} (-1)^{s(\sigma_i, \sigma_{i+1})} \|H\ket{\sigma_i} \|_1,
\end{equation}
and a probability density $p: \Omega \rightarrow \R$ by
\begin{equation}
\label{eq:path_density}
    p(\overline{\sigma}) = \frac{1}{|S_k|}  \prod_{i=0}^{\ell-1} \frac{|\bra{\sigma_{i+1}}H \ket{\sigma_i}|}{\|H\ket{\sigma_i} \|_1}.
\end{equation}
The above density can be viewed as the probability of sampling a path of $k$-simplices according to a Markov chain with transition matrix $P$, where the  $(\sigma, \tau)$-entry of $P$ represents the probability of transitioning from the $k$-simplex $\sigma$ to $\tau$, and is given by
\begin{equation}
\label{eq:mc_def}
    P_{\sigma \tau} : = \frac{|\bra{\tau}H \ket{\sigma}|}{\|H\ket{\sigma} \|_1},
\end{equation}
and with the initial $k$-simplex chosen uniformly from $S_k$. Under this probabilistic formulation, our normalized trace becomes
\begin{equation}
    \frac{1}{|S_k|}\tr(H^\ell) = \E[f_k(\overline{\bm{\sigma}})],
\end{equation}
where $\overline{\bm{\sigma}}$ is the random variable with density $p$ given in (\ref{eq:path_density}). By expressing the normalized trace as an expectation we obtain an estimator for the normalized Betti number by sampling paths according to the density $p$, applying the function $f$ to the samples, and then averaging the result. This is summarized in Algorithm~\ref{alg:betti}. 

\begin{algorithm}
\caption{Normalized Betti number estimation algorithm \texttt{CBNE} of \cite{Apers2023}}\label{alg:betti}
\begin{algorithmic}[1]
\Let {$\Gamma$ be a simplicial complex with Laplacian $\Delta_k$.} 
\Let {$C=2$ if $\Gamma$ is a clique complex, and otherwise $C=n$.}
\Let {$N_p$ be the number of sample paths.}
\State $N_p \gets \frac{\log(2/\eta)}{\epsilon^2}C^{2\ell}$
\For{$i=1,...,N_p$}
\State Sample a path $\overline{\sigma}_{i}$ of $k$-simplices of length $\ell$ according to the Markov chain $P$. 
\EndFor
\State \Return $\hat{\beta}_k =  \frac{1}{N_p}\sum_{i=1}^{N_p} f_k(\overline{\sigma}_{i})$. 
\Ensure {An estimate $\hat{\beta}_k$ for the normalized Betti number $\beta_k/|S_k|$ of $\Gamma$.}
\end{algorithmic}
\end{algorithm}

To determine the number of samples required to obtain an $(\epsilon, \eta)$-estimate there are several concentration inequalities that can be appealed to. 
\begin{theorem}[Hoeffding Inequality]
\label{thm:hoeffding}
    Suppose that $X_1,...,X_q$ are independent, bounded random variables such that $a \leq X_i \leq b$ almost surely for all $1 \leq i \leq q$, and let $\mu = \E[\frac{1}{q}\sum_{i=1}^q X_i]$. Then for any $\epsilon > 0$ we have
\begin{equation}
    \Pr\Big[ \Big|\frac{1}{q}\sum_{i=1}^q X_i-\mu\Big| \geq \epsilon \Big] \leq 2 \exp\Big(\frac{-2q\epsilon^2}{(b-a)^2} \Big). 
\end{equation}
\end{theorem}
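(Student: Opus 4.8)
The plan is to prove this by the standard Chernoff–Cramér exponential moment method. First I would reduce to centered variables: set $Y_i = X_i - \E[X_i]$, so that $\E[Y_i] = 0$ and each $Y_i$ still takes values in an interval of width $w := b-a$. Writing $S = \sum_{i=1}^q Y_i$, the event $\{|\frac{1}{q}\sum_i X_i - \mu| \ge \epsilon\}$ is exactly $\{|S| \ge q\epsilon\}$, so it suffices to bound $\Pr[S \ge q\epsilon]$ and $\Pr[S \le -q\epsilon]$ separately and combine them with a union bound, which is where the leading factor of $2$ will come from.

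For the upper tail, for any $s > 0$ I apply Markov's inequality to the nonnegative random variable $e^{sS}$:
\begin{equation*}
    \Pr[S \ge q\epsilon] = \Pr[e^{sS} \ge e^{sq\epsilon}] \le e^{-sq\epsilon}\,\E[e^{sS}] = e^{-sq\epsilon}\prod_{i=1}^q \E[e^{sY_i}],
\end{equation*}
where the last equality uses independence of the $Y_i$. The crux is then \emph{Hoeffding's lemma}: for a mean-zero random variable $Y$ supported in an interval of width $w$, one has $\E[e^{sY}] \le e^{s^2 w^2/8}$. I would prove this by using convexity of $t \mapsto e^{st}$ to linearly interpolate the exponential between the two endpoints of the support interval, taking expectations to obtain an explicit upper bound $g(s)$ depending only on $w$ and the endpoints, and then showing $\log g(s) \le s^2 w^2 / 8$ via a second-order Taylor expansion in $s$ around $s=0$; the key observation is that $\frac{d^2}{ds^2}\log g(s)$ is the variance of an auxiliary (exponentially tilted, two-point) random variable supported in a width-$w$ interval, hence at most $w^2/4$.

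Substituting the lemma gives $\Pr[S \ge q\epsilon] \le \exp\!\big(-sq\epsilon + q s^2 w^2 / 8\big)$ for every $s > 0$. Minimizing the exponent over $s$, which occurs at $s = 4\epsilon/w^2$, yields $\Pr[S \ge q\epsilon] \le \exp(-2q\epsilon^2/w^2)$. Running the identical argument on $-S$ (equivalently, on the variables $-Y_i$, which also have mean zero and range width $w$) bounds the lower tail by the same quantity, and adding the two bounds gives the claimed inequality with the factor $2$ and denominator $(b-a)^2$.

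The main obstacle — indeed the only step that is not pure bookkeeping — is Hoeffding's lemma, and within it the Taylor-remainder estimate showing that the cumulant generating function of a width-$w$ bounded, mean-zero variable grows no faster than $s^2 w^2/8$; once that is in hand, the rest is just Markov's inequality, factorization via independence, and a one-dimensional optimization over $s$.
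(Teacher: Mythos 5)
Your outline is the canonical Chernoff--Cram\'er proof of Hoeffding's inequality (centering, Markov applied to $e^{sS}$, factorization by independence, Hoeffding's lemma $\E[e^{sY}]\le e^{s^2(b-a)^2/8}$, optimization at $s=4\epsilon/(b-a)^2$, and a union bound for the two tails), and all the steps, including the constant $2q\epsilon^2/(b-a)^2$ obtained after optimizing, check out. Note that the paper does not prove this statement at all: it is quoted as a standard concentration inequality used as a black box, so there is no in-paper argument to compare against; your proposal correctly supplies the classical textbook proof.
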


\begin{theorem}[Chebyshev Inequality]
\label{thm:chebyshev}
    Suppose that $X$ is a random variable with expectation $\mu$. Then for any $\epsilon > 0$ we have
\begin{equation}
    \Pr\Big[ \big|X-\mu\big| \geq \epsilon \Big] \leq \frac{\Var[X]}{\epsilon^2}. 
\end{equation}
\end{theorem}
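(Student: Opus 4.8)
The statement to prove is the classical Chebyshev inequality, and the plan is to obtain it as a one-line consequence of Markov's inequality applied to the squared deviation. First I would record Markov's inequality: for any nonnegative random variable $Y$ and any $a > 0$, one has $\Pr[Y \geq a] \leq \E[Y]/a$. This follows from the pointwise bound $Y \geq a \cdot \mathbf{1}_{\{Y \geq a\}}$, which is valid precisely because $Y$ is nonnegative, by taking expectations of both sides and using monotonicity of expectation together with $\E[\mathbf{1}_{\{Y \geq a\}}] = \Pr[Y \geq a]$.

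The second step is to apply this with $Y = (X - \mu)^2$, which is nonnegative, and $a = \epsilon^2 > 0$. Since the events $\{\,|X - \mu| \geq \epsilon\,\}$ and $\{\,(X-\mu)^2 \geq \epsilon^2\,\}$ coincide, Markov's inequality gives $\Pr[\,|X - \mu| \geq \epsilon\,] = \Pr[\,(X-\mu)^2 \geq \epsilon^2\,] \leq \E[(X-\mu)^2]/\epsilon^2$. Finally, by definition $\E[(X - \mu)^2] = \Var[X]$ since $\mu = \E[X]$, which yields the claimed bound $\Pr[\,|X - \mu| \geq \epsilon\,] \leq \Var[X]/\epsilon^2$.

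There is essentially no obstacle here, as this is a textbook argument; the only point worth flagging is the degenerate case. If $\Var[X] = \infty$ the inequality holds vacuously, and if $\Var[X] < \infty$ then $\mu$ is finite and all the manipulations above are justified. One could alternatively argue directly by splitting the integral defining $\Var[X] = \E[(X-\mu)^2]$ over the region $\{\,|X - \mu| \geq \epsilon\,\}$ and its complement, bounding the integrand below by $\epsilon^2$ on the former and discarding the nonnegative contribution of the latter; this is the same proof with Markov's inequality written out explicitly.
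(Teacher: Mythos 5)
Your proof is correct and is the standard textbook argument: apply Markov's inequality to the nonnegative random variable $(X-\mu)^2$ with threshold $\epsilon^2$, noting that the events $\{|X-\mu|\geq\epsilon\}$ and $\{(X-\mu)^2\geq\epsilon^2\}$ coincide. The paper itself offers no proof of this theorem, citing it as a classical concentration inequality, so there is nothing to compare against; your handling of the degenerate case $\Var[X]=\infty$ is a sensible extra remark.
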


In \cite{Apers2023}, the authors apply the Hoeffding inequality to give an upper bound on the number of sample paths needed for an $(\epsilon, \eta)$-estimate of the normalized Betti number. While applying this inequality has the benefit that it only requires knowledge of the expectation and bounds of the random variable, it is not as sharp as the Chebyshev inequality when the random variable has small variance. On the other hand, the Chebyshev inequality is optimal in the sense that there exists random variables for which the inequality is an equality. To see this more clearly, applying the Hoeffding and Chebyshev inequality to the estimator $f_k(\overline{\bm{\sigma}})$ we obtain
\begin{equation}
    \Pr\Big[ \Big| \frac{1}{N_p}\sum_{i=1}^{N_p} f_k(\overline{\sigma}_{i}) - \E[f_k(\overline{\bm{\sigma}})] \Big| > \epsilon \Big] < 2 \exp\Big(\frac{-2N_p\epsilon^2}{\big(\max(f_k)-\min(f_k)\big)^2} \Big),
\end{equation}
and
\begin{equation}
    \Pr\Big[ \Big| \frac{1}{N_p}\sum_{i=1}^{N_p} f_k(\overline{\sigma}_{i}) - \E[f_k(\overline{\bm{\sigma}})] \Big| > \epsilon \Big] < \frac{\Var[f_k(\overline{\bm{\sigma}})]}{N_p \epsilon^2},
\end{equation}
respectively. Therefore to achieve an $(\epsilon, \eta)$-estimate it suffices to take either $\frac{\ln(2/\eta) \max(f_k)^2}{\epsilon^2}$ or $\frac{\Var[f_k(\overline{\bm{\sigma}})]}{\eta \epsilon^2}$ sample paths. Notably, a sufficient number of samples scales proportionally to either the maximum value squared of $f_k$ or the variance of $f_k(\overline{\bm{\sigma}})$. The challenge here is that it is not obvious if there is a better bound for the variance of $f_k(\overline{\bm{\sigma}})$ other than just the maximum value of the function $f_k$.  

Our goal here will be to determine the asymptotic growth of the variance of $f_k(\overline{\bm{\sigma}})$ as a function of the number of vertices $n$, dimension $k$, and path length $\ell$. Since the variance of a random variable is the second moment minus the square of its expectation, and since the expectation in this case lies in $[0,1]$, then we will primarily focus on the asymptotic growth of the second moment of $f_k(\overline{\bm{\sigma}})$, which will serve as a measure for the number of samples required to estimate the Betti number. This is summarized in the following problem.
\begin{problem}
    Given a finite dimensional oriented simplicial complex $\Gamma$ on $n$ vertices, a dimension $k > 0$, and a path length $\ell$, determine bounds for the second moment $\E[|f_k(\overline{\bm{\sigma}})|^2]$ as a function of $n,k$ and $\ell$. 
\end{problem}

\section{Analyzing the Classical Monte Carlo Algorithm}
\label{sec:CBNE_analysis}

In this section we analyze the Markov chain defined in (\ref{eq:mc_def}) and then give bounds for variance of the estimator used in Algorithm~\ref{alg:betti}. We can recognize this Markov chain as a random walk over a particular graph which we will now define.
\begin{definition}
    The $k$-\textit{simplex graph} of a simpilicial complex $\Gamma$ is the graph denoted by $G(S_k)$ with vertex set $S_k$ with two simplices adjacent if they intersect in a $(k-1)$-simplex but are not contained in a common $(k+1)$-simplex.\footnote{Related graphs over the simplices of a simplicial complex have been defined and studied. For example, in \cite{Incudini2024} a similar signed graph was defined with the same set of vertices although with adjacency defined if the simplices either intersect in a $(k-1)$-simplex or are contained in a common $(k+1)$-simplex. To our knowledge this appears to be the first time that our particular graph over the simplices has been studied.}
\end{definition}
Throughout the rest of the paper if $\sigma$ is a $k$-simplex we will write $\deg(\sigma)$ to denote the degree of $\sigma$ as a vertex in the simplex graph. Note that by Theorem~\ref{thm:lap_entries}, the definition of the simplex graph captures when the off-diagonal entries of $\Delta_k$ are nonzero. We begin with some basic facts about the matrix $H = I - \frac{1}{n}\Delta_k$, the Markov chain $P$, and the simplex graph $G(S_k)$ which will be useful in the later sections.

\begin{theorem}
\label{thm:mc_properties}
Let $\Gamma$ be a finite dimensional oriented simplicial complex on $n$ vertices with simplex graph $G(S_k)$, and let $G(S_k)$ have minimum and maximum degrees $\delta(S_k), \Delta(S_k)$, respectively. The following hold:
\begin{enumerate}[label=(\alph*)]
    \item For any $k$-simplex $\sigma \in S_k$, the 1-norm of the $\sigma$ column of the matrix $H = I - \frac{1}{n}\Delta_k$ is
\begin{equation*}
    \|H\ket{\sigma}\|_1 = 1 + \frac{1}{n}\big(\deg(\sigma) - d_{\textnormal{up}}(\sigma) - k - 1\big).
\end{equation*}
    \item The transition probabilities of the Markov chain $P$ defined in (\ref{eq:mc_def}) are
\begin{equation*}
    P_{\sigma\tau}  = \begin{cases}
    \frac{1}{n + \deg(\sigma) - d_{up}(\sigma) - k -1 } & \sigma, \tau \textnormal{ are adjacent in } G(S_k) \\
    \frac{n - d_{up}(\sigma) - k}{n + \deg(\sigma)- d_{up}(\sigma) - k -1} & \sigma = \tau \\
    0 & \textnormal{else}. \\ 
\end{cases} 
\end{equation*}
    \item The maximum degree of a vertex in $G(S_k)$ is at most $(n-k-1)(k+1)$. If $\Gamma$ is a clique complex of a graph then the maximum degree is at most $n-k-1$. 
\end{enumerate}
\end{theorem}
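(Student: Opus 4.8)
The plan is to derive all three parts from the explicit Laplacian entries in Theorem~\ref{thm:lap_entries} together with the definition of the simplex graph $G(S_k)$. For part (a), I would write the $\sigma$-column of $H = I - \frac{1}{n}\Delta_k$ entrywise: by Theorem~\ref{thm:lap_entries} the diagonal entry is $1 - \frac{1}{n}\big(d_{\textnormal{up}}(\sigma)+k+1\big)$, there is an entry $\pm\frac{1}{n}$ in each of the $\deg(\sigma)$ positions indexed by a neighbour of $\sigma$ in $G(S_k)$, and every other entry is $0$. The only point requiring an argument is that the diagonal entry is nonnegative, so that its absolute value is itself; this follows from $d_{\textnormal{up}}(\sigma)\le n-k-1$ (a $k$-simplex uses $k+1$ vertices and can be enlarged only by one of the remaining $n-k-1$), whence $d_{\textnormal{up}}(\sigma)+k+1\le n$. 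Summing absolute values over the column then gives $\|H\ket{\sigma}\|_1 = 1 - \frac{1}{n}\big(d_{\textnormal{up}}(\sigma)+k+1\big) + \frac{1}{n}\deg(\sigma)$, as claimed.

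Part (b) is then pure algebra: substitute into the definition (\ref{eq:mc_def}) the value $|\bra{\tau}H\ket{\sigma}| = \frac{1}{n}$ for $\sigma,\tau$ adjacent in $G(S_k)$, the value $|\bra{\sigma}H\ket{\sigma}| = 1 - \frac{1}{n}\big(d_{\textnormal{up}}(\sigma)+k+1\big)$ on the diagonal (using the sign analysis from part (a)), and $0$ otherwise, together with $\|H\ket{\sigma}\|_1$ from part (a); clearing the common factor of $\frac{1}{n}$ from numerator and denominator produces the stated transition probabilities. I would finish this part by checking that $P_{\sigma\sigma} + \deg(\sigma)\cdot P_{\sigma\tau} = 1$ as a consistency check that $P$ is stochastic.

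For part (c) I would count the neighbours of a fixed $k$-simplex $\sigma$ in $G(S_k)$ directly. Every neighbour $\tau$ has the form $\tau = (\sigma\setminus\{x\})\cup\{v\}$ for a unique vertex $x\in\sigma$ and a unique vertex $v\notin\sigma$, so $\deg(\sigma)$ is at most the number of such pairs, namely $(k+1)(n-k-1)$, which is the general bound. For a clique complex the improvement comes from the second clause in the definition of $G(S_k)$, that $\sigma\cup\tau = \sigma\cup\{v\}\notin\Gamma$: since $\tau$ is a clique, $v$ is adjacent to every vertex of $\sigma\setminus\{x\}$, while since $\sigma\cup\{v\}$ is not a clique, $v$ is non-adjacent to some vertex of $\sigma$; together these force $x$ to be the unique vertex of $\sigma$ non-adjacent to $v$. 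Hence the map $\tau\mapsto v$ is injective on the neighbourhood of $\sigma$, giving at most $n-k-1$ neighbours.

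I expect this last observation — that in the clique-complex case the chosen outside vertex $v$ pins down the deleted vertex $x$ through the ``not contained in a common $(k+1)$-simplex'' condition — to be the only step that is not pure bookkeeping; parts (a) and (b) follow mechanically from Theorem~\ref{thm:lap_entries} once the sign of the diagonal of $H$ is settled.
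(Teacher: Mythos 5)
Your proposal is correct, and for parts (a) and (b) it follows essentially the same route as the paper: read the $\sigma$-column of $H$ off Theorem~\ref{thm:lap_entries}, sum absolute values, and then divide by $\|H\ket{\sigma}\|_1$ to get the transition probabilities. One small point in your favour: you explicitly justify that the diagonal entry $1-\frac{1}{n}(d_{\textnormal{up}}(\sigma)+k+1)$ is nonnegative via $d_{\textnormal{up}}(\sigma)\le n-k-1$, a step the paper's computation leaves implicit when it drops the absolute value. The genuine difference is in part (c): the paper proves only the general bound $(n-k-1)(k+1)$ by the same pair-counting you use, and for the clique-complex bound $n-k-1$ it simply cites Claim 4.2 of Apers et al., whereas you give a self-contained argument — since $\tau=(\sigma\setminus\{x\})\cup\{v\}$ is a clique and $\sigma\cup\{v\}$ is not, the only possible missing edge is $\{v,x\}$, so $x$ is the unique non-neighbour of $v$ in $\sigma$ and $\tau\mapsto v$ is injective. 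That argument is correct and makes the theorem self-contained, at the cost of a few lines; the paper's citation buys brevity. Finally, note that your derivation (like the paper's own computation) yields $P_{\sigma\sigma}=\frac{n-d_{\textnormal{up}}(\sigma)-k-1}{n+\deg(\sigma)-d_{\textnormal{up}}(\sigma)-k-1}$, which is consistent with your stochasticity check $P_{\sigma\sigma}+\deg(\sigma)P_{\sigma\tau}=1$; the numerator $n-d_{\textnormal{up}}(\sigma)-k$ appearing in the theorem statement is an off-by-one typo in the statement, not an error in your argument, though it is worth flagging rather than asserting that the algebra ``produces the stated probabilities.''
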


\begin{proof}
(a) The $1$-norm of the $\sigma$ column of $H$ can be computed from Theorem~\ref{thm:lap_entries} as
\begin{equation*}
    \begin{split}
        \|H\ket{\sigma}\|_1 &= \sum_{\tau}\Big|I-\frac{1}{n}\Delta_k\Big|_{\tau \sigma} \\
        &= \Big|I-\frac{1}{n}\Delta_k\Big|_{\sigma \sigma} + \sum_{\tau \neq \sigma} \Big|\frac{1}{n}\Delta_k \Big|_{\tau \sigma} \\
        &= |1-\frac{1}{n}(d_{\textnormal{up}}(\sigma) + k + 1)| + \frac{1}{n}\deg(\sigma)  \\
        &= 1+\frac{1}{n}(\deg(\sigma)- d_{\textnormal{up}}(\sigma) - k - 1)  \\
    \end{split}
\end{equation*}
(b) From the definition of the Markov chain in (\ref{eq:mc_def}), along with Theorem~\ref{thm:lap_entries}, we obtain
\begin{equation*}
P_{\sigma \tau} : = \frac{|\bra{\tau}H \ket{\sigma}|}{\|H\ket{\sigma} \|_1} = \begin{cases}
    \frac{\frac{1}{n}}{1 + \frac{1}{n}( \deg(\sigma) - d_{up}(\sigma) - k -1)} & \sigma, \tau \textnormal{ are adjacent in } G(S_k) \\
    \frac{1-\frac{1}{n}(d_{\textnormal{up}}(\sigma) + k + 1)}{1 + \frac{1}{n}(\deg(\sigma) - d_{\textnormal{up}}(\sigma)-k -1)}, & \tau = \sigma \\
    0 & \textnormal{else}. \\ 
\end{cases}    
\end{equation*} 
(c) Consider a $k$-simplex $\sigma$ of $\Gamma$. If $\tau$ is another $k$-simplex of $\Gamma$ which forms a neighbor in the simplex graph $G(S_k)$, then $\tau$ can be expressed as $(\sigma\setminus\{x_i\})\cup\{x_j\}$ for some vertices $x_i, x_j$ in $\Gamma$ with $x_i \in \sigma$ and $x_j \notin \sigma$. As each possible neighbor of $\sigma$ leads to a distinct pair $(x_i, x_j)$, and since there are $(n-k-1)(k+1)$ such pairs, then the maximum degree of a vertex in $G(S_k)$ is at most $(n-k-1)(k+1)$. For the case when $\Gamma$ is a clique complex, see \cite[Claim 4.2]{Apers2023}. 
\end{proof}

We now present some general bounds on the second moment of the estimator in Algorithm~\ref{alg:betti}. 

\begin{theorem}
\label{thm:variance_bd_main}
    Let $\Gamma$ be a finite dimensional oriented simplicial complex on $n$ vertices and let $d_{\textnormal{up}}$ denote the maximum up-degree of a $k$-simplex in $\Gamma$. For every dimension $k > 0$, the second moment of $f_k(\overline{\bm{\sigma}})$ is bounded as
\begin{equation*}
    \Big(1 - \frac{d_{\textnormal{up}}+k+1}{n} \Big)^{\ell} \frac{1}{|S_k|} \sum_{\sigma \in S_k} \|H\ket{\sigma}\|_1^{\ell} \leq \E[|f_k(\overline{\bm{\sigma}})|^2] \leq  \frac{1}{|S_k|} \sum_{\sigma \in S_k} \|H\ket{\sigma}\|_1^{2\ell}.
\end{equation*}
Additionally, when applying Algorithm~\ref{alg:betti} to obtain an $(\epsilon, \eta)$-estimate of the normalized trace of $H^\ell$, then it suffices to use
\begin{equation}
    N_p = \frac{1}{\eta\epsilon^2} \times \frac{1}{|S_k|} \sum_{\sigma \in S_k} \|H\ket{\sigma}\|_1^{2\ell}
\end{equation}
sample paths.
\end{theorem}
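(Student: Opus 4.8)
The plan is to compute the second moment exactly as a normalized trace of a nonnegative matrix and then sandwich that trace. First I would reduce to a trace: since $\braket{\sigma_\ell|\sigma_0}\in\{0,1\}$ and each sign $(-1)^{s(\sigma_i,\sigma_{i+1})}$ squares to $1$, we have $|f_k(\overline{\sigma})|^2=\braket{\sigma_\ell|\sigma_0}\prod_{i=0}^{\ell-1}\|H\ket{\sigma_i}\|_1^2$. Let $A$ be the entrywise absolute value of $H$, so $A_{\tau\sigma}=|\bra{\tau}H\ket{\sigma}|$, let $D$ be the diagonal matrix with $D_{\sigma\sigma}=\|H\ket{\sigma}\|_1$, and set $M:=AD$, i.e.\ $M_{\tau\sigma}=|\bra{\tau}H\ket{\sigma}|\,\|H\ket{\sigma}\|_1$, which has nonnegative entries. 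Multiplying $|f_k(\overline{\sigma})|^2$ by the density $p$ of \eqref{eq:path_density}, each edge contributes $\frac{|\bra{\sigma_{i+1}}H\ket{\sigma_i}|}{\|H\ket{\sigma_i}\|_1}\cdot\|H\ket{\sigma_i}\|_1^2=M_{\sigma_{i+1}\sigma_i}$, so summing over $\sigma_1,\dots,\sigma_{\ell-1}$ gives $(M^\ell)_{\sigma_\ell\sigma_0}$ and the indicator $\braket{\sigma_\ell|\sigma_0}$ restricts to $\sigma_\ell=\sigma_0$, leaving
\begin{equation*}
\E[|f_k(\overline{\bm{\sigma}})|^2]=\frac{1}{|S_k|}\sum_{\sigma_0\in S_k}(M^\ell)_{\sigma_0\sigma_0}=\frac{1}{|S_k|}\tr(M^\ell).
\end{equation*}
Everything then reduces to bounding $\tr(M^\ell)$.

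For the lower bound, because $M\ge 0$ entrywise I would expand $\tr(M^\ell)$ as a sum of products over closed walks of length $\ell$ in $G(S_k)$ and keep only the constant walks $\sigma_0=\dots=\sigma_{\ell-1}$, giving $\tr(M^\ell)\ge\sum_{\sigma\in S_k}M_{\sigma\sigma}^\ell$. By Theorem~\ref{thm:lap_entries}, $M_{\sigma\sigma}=\bigl(1-\tfrac1n(d_{\textnormal{up}}(\sigma)+k+1)\bigr)\|H\ket{\sigma}\|_1$, and since every $(k+1)$-simplex containing $\sigma$ is obtained by adjoining one of the $n-k-1$ vertices outside $\sigma$, we have $d_{\textnormal{up}}(\sigma)\le n-k-1$; hence the scalar factor is nonnegative and at least $1-\tfrac1n(d_{\textnormal{up}}+k+1)$. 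Raising to the $\ell$-th power, summing over $\sigma$, and dividing by $|S_k|$ yields the left-hand inequality.

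For the upper bound I would use \eqref{eq:mc_def} to write $|\bra{\tau}H\ket{\sigma}|=P_{\sigma\tau}\,\|H\ket{\sigma}\|_1$, so that $M_{\sigma_{i+1}\sigma_i}=P_{\sigma_i\sigma_{i+1}}\,w_{\sigma_i}$ with $w_\sigma:=\|H\ket{\sigma}\|_1^2$, giving
\begin{equation*}
\tr(M^\ell)=\sum_{\sigma_0,\dots,\sigma_{\ell-1}}\Bigl(\prod_{i=0}^{\ell-1}w_{\sigma_i}\Bigr)\Bigl(\prod_{i=0}^{\ell-1}P_{\sigma_i\sigma_{i+1}}\Bigr),\qquad \sigma_\ell:=\sigma_0.
\end{equation*}
Then I would apply the AM--GM inequality in the form $\prod_{i=0}^{\ell-1}w_{\sigma_i}=\bigl(\prod_{i=0}^{\ell-1}w_{\sigma_i}^{\ell}\bigr)^{1/\ell}\le\frac1\ell\sum_{i=0}^{\ell-1}w_{\sigma_i}^{\ell}$, exchange the order of summation, and for each fixed position $j$ sum out the remaining vertices of the closed walk; since $P$ is row-stochastic the product of transition probabilities collapses to the return probability $(P^\ell)_{\sigma_j\sigma_j}$, which is at most $1$ because $P^\ell$ is again row-stochastic. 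This leaves $\tr(M^\ell)\le\frac1\ell\cdot\ell\cdot\sum_{\sigma\in S_k}w_\sigma^{\ell}=\sum_{\sigma\in S_k}\|H\ket{\sigma}\|_1^{2\ell}$, and dividing by $|S_k|$ gives the right-hand inequality. I expect this to be the one genuinely delicate step: bounding $|f_k|^2$ by its maximum (as in the Hoeffding argument of \cite{Apers2023}) is far too lossy, so one must exploit the trace/Markov structure, and the trick of decoupling the path weight $\prod_i w_{\sigma_i}$ via AM--GM and then collapsing the surviving transition probabilities into a return probability is the crux.

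Finally, for the sample-complexity claim I would apply Chebyshev (Theorem~\ref{thm:chebyshev}) to $\hat{\beta}_k=\frac1{N_p}\sum_{i=1}^{N_p}f_k(\overline{\sigma}_i)$, which by independence is an unbiased estimator of $\frac1{|S_k|}\tr(H^\ell)=\E[f_k(\overline{\bm{\sigma}})]$ with variance $\Var[f_k(\overline{\bm{\sigma}})]/N_p$. This gives $\Pr\bigl[|\hat{\beta}_k-\tfrac1{|S_k|}\tr(H^\ell)|>\epsilon\bigr]\le\Var[f_k(\overline{\bm{\sigma}})]/(N_p\epsilon^2)$, and combining the trivial bound $\Var[f_k(\overline{\bm{\sigma}})]\le\E[|f_k(\overline{\bm{\sigma}})|^2]$ with the upper bound just proved, the stated choice $N_p=\frac1{\eta\epsilon^2}\cdot\frac1{|S_k|}\sum_{\sigma\in S_k}\|H\ket{\sigma}\|_1^{2\ell}$ makes the failure probability at most $\eta$. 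The only implicit hypothesis used throughout is that $\|H\ket{\sigma}\|_1>0$ for each $\sigma$ so that $P$ is defined; on simplices with $H\ket{\sigma}=0$ the estimator $f_k$ vanishes and all three quantities in the chain of inequalities are unaffected.
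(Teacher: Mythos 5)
Your proof is correct and follows essentially the same route as the paper's: the lower bound keeps only the constant (diagonal) walks, the upper bound decouples the weight product via AM--GM (your single AM--GM on $\ell$-th powers is equivalent to the paper's AM--GM plus Jensen) and collapses the remaining transition probabilities into the return probability $(P^\ell)_{\sigma\sigma}\le 1$, and the sample complexity follows from Chebyshev with $\Var \le \E[|f_k|^2]$. Your rewriting of the second moment as $\frac{1}{|S_k|}\tr(M^\ell)$ with $M=AD$ is a clean but cosmetic repackaging of the paper's path-sum argument.
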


\begin{proof}
    By definition, the second moment is given by $\E[|f_k(\overline{\bm{\sigma}})|^2] = \sum_{\overline{\sigma} \in \Omega} f(\overline{\sigma})^2p(\overline{\sigma})$. By restricting this sum to the collection of paths which remain at the same $k$-simplex for its entirety, the second moment is at least 
\begin{equation*}
\begin{split}
    \E[|f_k(\overline{\bm{\sigma}})|^2] &\geq \frac{1}{|S_k|} \sum_{\sigma \in S_k} \big\|H\ket{\sigma} \big\|_1^{2\ell} \ \big(P_{\sigma\sigma}\big)^\ell \\
    &=  \frac{1}{|S_k|} \sum_{\sigma \in S_k} \big\|H\ket{\sigma} \big\|_1^\ell \Big(1 - \frac{d_{\textnormal{up}}(\sigma) + k + 1}{n} \Big)^\ell  \\
    &\geq \Big(1 - \frac{d_{\textnormal{up}} + k + 1}{n} \Big)^\ell  \frac{1}{|S_k|} \sum_{\sigma \in S_k} \big\|H\ket{\sigma} \big\|_1^\ell   \\
\end{split}
\end{equation*}
which gives the claimed lower bound. For the upper bound, we apply the inequality of arithmetic and geometric means, which states that $\sqrt[n]{x_1x_2\dots x_n} \leq \frac{1}{n}(x_1+\dots + x_n)$ for any non-negative real numbers $x_1,...,x_n$. Applied to the function $|f_k(\overline{\sigma})|^2$ followed by Jensen's inequality shows that 
\begin{equation*}
    \prod_{i=0}^{\ell-1} \big\|H\ket{\sigma_i} \big\|_1^2 \leq \Big(\frac{1}{\ell} \sum_{i=0}^{\ell-1} \big\|H\ket{\sigma_i} \big\|_1^2 \Big)^{\ell} \leq \frac{1}{\ell} \sum_{i=0}^{\ell-1} \big\|H\ket{\sigma_i} \big\|_1^{2\ell} 
\end{equation*}
for any path $\overline{\sigma}=(\sigma_0,...,\sigma_\ell)\in \Omega$. Additionally, write $\Omega^\prime \subset \Omega$ to denote the set of paths in $\Omega$ with the same starting and ending simplex, and note that $f_k$ is zero outside of $\Omega^\prime$. Therefore we obtain
\begin{equation*}
    \E[|f_k(\overline{\bm{\sigma}})|^2] \leq  \frac{1}{\ell} \sum_{\overline{\sigma} \in \Omega^\prime} \sum_{i=0}^{\ell-1}   \big\|H\ket{\sigma_i} \big\|_1^{2\ell} p(\overline{\sigma}).
\end{equation*}
Exchanging the order of the summations shows
\begin{equation*}
    \E[|f_k(\overline{\bm{\sigma}})|^2] \leq \frac{1}{\ell} \sum_{i=0}^{\ell-1} \sum_{\sigma \in S_k} \big\|H\ket{\sigma} \big\|_1^{2\ell}  \sum_{\overline{\sigma} \in \Omega^\prime, \sigma_i =\sigma} p(\overline{\sigma}).
\end{equation*}
We now claim that $\sum_{\overline{\sigma} \in \Omega^\prime, \sigma_i =\sigma} p(\overline{\sigma}) \leq \frac{1}{|S_k|}$. To prove this, note that this sum represents the probability that a path has the same starting and ending point and meets the simplex $\sigma$ at the $i^{th}$ step. This is equal to $\frac{1}{|S_k|}\sum_{\tau \in S_k}(P^{\ell-i})_{\tau \sigma} (P^i)_{\sigma \tau}$, which applying the definition of matrix multiplication is equal to $\frac{1}{|S_k|}(P^\ell)_{\sigma \sigma}$. Since the entries of $P^\ell$ are probabilities then this quantity is at most $\frac{1}{|S_k|}$. In total, we have shown 
\begin{equation*}
    \E[|f_k(\overline{\bm{\sigma}})|^2] \leq \frac{1}{|S_k|} \sum_{\sigma \in S_k} \big\|H\ket{\sigma} \big\|_1^{2\ell} .
\end{equation*}
as was claimed. Lastly, by the Chebyshev inequality in Theorem~\ref{thm:chebyshev} the number of sample paths sufficient for an $(\epsilon, \eta)$-estimate is
\begin{equation*}
    N_p = \frac{\Var[f_k(\overline{\bm{\sigma}})]}{\eta \epsilon^2} \leq \frac{1}{\eta \epsilon^2} \times  \E[|f_k(\overline{\bm{\sigma}})|^2] \leq \frac{1}{\eta \epsilon^2} \times \frac{1}{|S_k|} \sum_{\sigma \in S_k} \big\|H\ket{\sigma} \big\|_1^{2\ell},
\end{equation*}
completing the proof.
\end{proof}

To illustrate how varied the second moment of $f_k(\overline{\bm{\sigma}})$ can be, we apply the bounds in Theorem~\ref{thm:variance_bd_main} to two simple examples. 

\begin{example}[Complete $(k+1)$-partite graph]
\label{ex:complete_k_partite}
We define the complete $(k+1)$-partite graph to have $(k+1)$ parts, $m$ vertices in each part, and any two vertices adjacent if they lie in different parts. The clique complex of a $(k+1)$-partite graph has dimension $k$. Each $(k+1)$-clique meets $(m-1)(k+1)$ other $(k+1)$-cliques in all but one vertex. Thus
\begin{equation*}
\|H\ket{\sigma}\|_1 = 1 + \frac{(k+1)(m-1)-k-1}{n} =2\Big(1-\frac{k+1}{n}\Big)    
\end{equation*}
for every $k$-simplex $\sigma$. Applying the bounds in Theorem~\ref{thm:variance_bd_main} we obtain
\begin{equation*}
2^\ell\Big(1 - \frac{k+1}{n} \Big)^{2\ell} \leq \E[|f_k(\overline{\bm{\sigma}})|^2] \leq  4^\ell \Big(1 - \frac{k+1}{n} \Big)^{2\ell}.
\end{equation*}
This example shows that the second moment of $f_k(\overline{\bm{\sigma}})$ can be as large as $\Omega(2^\ell)$. 
\end{example}

\begin{example}[Disjoint union of $(k+1)$-cliques]
\label{ex:clique_union}
Consider a graph on $n$ vertices that is a disjoint union of $m$ $(k+1)$-cliques. In this case every $k$-simplex in the clique complex has zero up-degree, and the simplex graph is an edgeless graph with $m$ vertices. Therefore the bounds in Theorem~\ref{thm:variance_bd_main} give
\begin{equation*}
    \E[|f_k(\overline{\bm{\sigma}})|^2] = \Big(1 - \frac{k+1}{n} \Big)^{2\ell}.
\end{equation*}
We can see from this example that the second moment of $f_k(\overline{\bm{\sigma}})$ is less than $1$, which reflects the fact that the Betti numbers in this case are all zero. 
\end{example}

We will now show that the sample complexity derived in Theorem~\ref{thm:variance_bd_main} implies the ones derived by Apers et al. in \cite{Apers2023}. This shows that our sample complexity is no worse than the ones derived there. However, as we will see in Section~\ref{sec:improved_alg}, the upper bounds derived in Theorem~\ref{thm:variance_bd_main} will lead us to an improvement to Algorithm~\ref{alg:betti}. 

\begin{corollary}
    Let $\Gamma$ be a finite dimensional oriented simplicial complex on $n$ vertices. For every dimension $k > 0$, the second moment of $f_k(\overline{\bm{\sigma}})$ is bounded above as
\begin{equation*}
    \E[|f_k(\overline{\bm{\sigma}})|^2] \leq  (k+2)^{2\ell}
\end{equation*}
If $\Gamma$ is a clique complex of a graph on $n$ vertices then for every dimension $k > 0$, the second moment of $f_k(\overline{\bm{\sigma}})$ is bounded above as
\begin{equation*}
    \E[|f_k(\overline{\bm{\sigma}})|^2] \leq  2^{2\ell}\big(1 - \frac{k+1}{n}\big)^{2\ell}.
\end{equation*}
\end{corollary}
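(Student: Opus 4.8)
The plan is to feed the uniform upper bound from Theorem~\ref{thm:variance_bd_main}, namely
\[
\E[|f_k(\overline{\bm{\sigma}})|^2] \leq \frac{1}{|S_k|}\sum_{\sigma \in S_k}\|H\ket{\sigma}\|_1^{2\ell},
\]
into a pointwise bound on $\|H\ket{\sigma}\|_1$ that does not depend on $\sigma$. Once we know $\|H\ket{\sigma}\|_1 \leq B$ for every $k$-simplex $\sigma$, the right-hand side is at most $B^{2\ell}$ since it is an average of $|S_k|$ terms each bounded by $B^{2\ell}$, and we are done. So the entire content is extracting the right value of $B$.

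For the general case, I would start from the exact formula in Theorem~\ref{thm:mc_properties}(a), $\|H\ket{\sigma}\|_1 = 1 + \frac{1}{n}\big(\deg(\sigma) - d_{\textnormal{up}}(\sigma) - k - 1\big)$, and bound it from above by dropping the $-d_{\textnormal{up}}(\sigma)$ term (valid since $d_{\textnormal{up}}(\sigma)\geq 0$) and inserting the degree bound $\deg(\sigma)\leq (n-k-1)(k+1)$ from Theorem~\ref{thm:mc_properties}(c). A short computation gives
\[
\|H\ket{\sigma}\|_1 \leq 1 + \frac{(n-k-1)(k+1) - (k+1)}{n} = 1 + \frac{(k+1)(n-k-2)}{n} = (k+2)\Big(1 - \frac{k+1}{n}\Big) \leq k+2,
\]
so $B = k+2$ works, yielding $\E[|f_k(\overline{\bm{\sigma}})|^2] \leq (k+2)^{2\ell}$. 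For a clique complex the argument is identical except that Theorem~\ref{thm:mc_properties}(c) supplies the sharper bound $\deg(\sigma)\leq n-k-1$; substituting this in place of $(n-k-1)(k+1)$ gives $\|H\ket{\sigma}\|_1 \leq 1 + \frac{(n-k-1)-(k+1)}{n} = 2\big(1-\frac{k+1}{n}\big)$, hence $\E[|f_k(\overline{\bm{\sigma}})|^2] \leq 2^{2\ell}\big(1-\frac{k+1}{n}\big)^{2\ell}$.

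I do not expect a genuine obstacle here — the corollary is essentially a substitution. The only point requiring mild care is that the bounds on $\deg(\sigma)$ and $d_{\textnormal{up}}(\sigma)$ are applied independently (maximizing $\deg$, minimizing $d_{\textnormal{up}}$); this is legitimate because we only need an upper bound on $\|H\ket{\sigma}\|_1$, and the degree bound of Theorem~\ref{thm:mc_properties}(c) holds for every simplex regardless of its up-degree. It is also worth remarking that the factor $\big(1-\frac{k+1}{n}\big)^{2\ell}$ retained in the clique-complex bound matches the form appearing in Examples~\ref{ex:complete_k_partite} and~\ref{ex:clique_union}, confirming the bound is tight in those cases.
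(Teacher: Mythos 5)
Your proposal is correct and follows essentially the same route as the paper's proof: both apply the second-moment upper bound of Theorem~\ref{thm:variance_bd_main} and then bound each $\|H\ket{\sigma}\|_1$ uniformly via Theorem~\ref{thm:mc_properties}(a) and the degree bounds of Theorem~\ref{thm:mc_properties}(c), dropping the nonnegative up-degree term. Your intermediate simplification $\|H\ket{\sigma}\|_1 \leq (k+2)\big(1-\frac{k+1}{n}\big)$ in the general case is a slightly more explicit rewriting of the same quantity the paper bounds by $k+2$, so there is no substantive difference.
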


\begin{proof}
    For the case of a general simplicial complex, we know from Theorem~\ref{thm:mc_properties}(c) that the maximum degree of a vertex in $G(S_k)$ is at most $(n-k-1)(k+1)$. Applying Theorem~\ref{thm:mc_properties}(a) and ~\ref{thm:variance_bd_main} shows that
\begin{equation*}
    \E[|f_k(\overline{\bm{\sigma}})|^2] \leq \Big(1 + \frac{1}{n}\big((n-k-1)(k+1) - k - 1\big)\Big)^{2\ell} \leq (k+2)^{2\ell}.
\end{equation*}
For the case of a clique complex, the maximum degree of a vertex in $G(S_k)$ is at most $n-k-1$, and we similarly obtain 
\begin{equation*}
    \E[|f_k(\overline{\bm{\sigma}})|^2] \leq \Big(1 + \frac{1}{n}\big((n-k-1) - k - 1\big)\Big)^{2\ell} = 2^{2\ell}\Big(1 - \frac{k+1}{n}\Big)^{2\ell} .
\end{equation*}
\end{proof}

To better understand the asymptotics of the bounds in Theorem~\ref{thm:variance_bd_main} we give sufficient conditions for the upper and lower bounds to be exponential in $\ell$, and present the simplified asymptotics. 

\begin{corollary}
\label{cor:var_bounds_asymp}
    Let $\Gamma$ be a finite dimensional oriented simplicial complex on $n$ vertices with simplex graph $G(S_k)$, and let $G(S_k)$ have minimum and maximum degrees $\delta(S_k), \Delta(S_k)$, respectively. Suppose that $\Gamma$ has dimension $k$, and that $k \in o(n)$. Then the variance of $f_k(\overline{\bm{\sigma}})$ is bounded asymptotically as
\[ \Var[f_k(\overline{\bm{\sigma}})] \in \mathcal{O}\Big( \Big(1 + \frac{\Delta(S_k) }{n}\Big)^{2\ell} \Big),  \]
\[ \Var[f_k(\overline{\bm{\sigma}})] \in \Omega\Big(\Big(1 + \frac{\delta(S_k)}{n}\Big)^\ell \Big)\]
\end{corollary}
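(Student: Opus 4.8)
The plan is to read off both asymptotic bounds directly from the two–sided estimate in Theorem~\ref{thm:variance_bd_main}, turning the averages $\frac{1}{|S_k|}\sum_{\sigma}\|H\ket{\sigma}\|_1^{c\ell}$ into closed form via the formula for $\|H\ket{\sigma}\|_1$ in Theorem~\ref{thm:mc_properties}(a) and the degree bounds $\delta(S_k)\le\deg(\sigma)\le\Delta(S_k)$, and then converting the resulting second–moment bounds back into variance bounds using $\Var[f_k(\overline{\bm\sigma})]=\E[|f_k(\overline{\bm\sigma})|^2]-\E[f_k(\overline{\bm\sigma})]^2$ together with the fact, recorded in the excerpt, that $\E[f_k(\overline{\bm\sigma})]=\tfrac{1}{|S_k|}\tr(H^\ell)\in[0,1]$.

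For the upper bound I would note that $d_{\textnormal{up}}(\sigma)+k+1\ge 0$ for every $k$-simplex, so Theorem~\ref{thm:mc_properties}(a) gives $\|H\ket{\sigma}\|_1=1+\tfrac1n(\deg(\sigma)-d_{\textnormal{up}}(\sigma)-k-1)\le 1+\tfrac{\deg(\sigma)}{n}\le 1+\tfrac{\Delta(S_k)}{n}$. Substituting into the upper estimate of Theorem~\ref{thm:variance_bd_main} yields $\E[|f_k(\overline{\bm\sigma})|^2]\le(1+\Delta(S_k)/n)^{2\ell}$, and since $\Var[f_k(\overline{\bm\sigma})]\le\E[|f_k(\overline{\bm\sigma})|^2]$ this is exactly $\mathcal O\big((1+\Delta(S_k)/n)^{2\ell}\big)$ with implied constant one. (This half uses neither $\dim\Gamma=k$ nor $k\in o(n)$.)

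For the lower bound the dimension hypothesis is what does the work: if $\Gamma$ has dimension $k$ then $S_{k+1}=\emptyset$, hence $d_{\textnormal{up}}(\sigma)=0$ for all $\sigma\in S_k$, so the maximum up-degree in Theorem~\ref{thm:variance_bd_main} is $0$, $\|H\ket{\sigma}\|_1=1+\tfrac{\deg(\sigma)-k-1}{n}$, and the prefactor is $\big(1-\tfrac{k+1}{n}\big)^\ell$. Using $\deg(\sigma)\ge\delta(S_k)$, the elementary inequality $1+\tfrac{\delta(S_k)-k-1}{n}\ge\big(1-\tfrac{k+1}{n}\big)\big(1+\tfrac{\delta(S_k)}{n}\big)$ (the difference of the two sides is $\tfrac{(k+1)\delta(S_k)}{n^2}\ge0$), and the observation that all bases involved are nonnegative because $k\le n-1$, the lower estimate of Theorem~\ref{thm:variance_bd_main} gives
\[
\E[|f_k(\overline{\bm\sigma})|^2]\;\ge\;\Big(1-\tfrac{k+1}{n}\Big)^\ell\Big(1+\tfrac{\delta(S_k)-k-1}{n}\Big)^\ell\;\ge\;\Big(1-\tfrac{k+1}{n}\Big)^{2\ell}\Big(1+\tfrac{\delta(S_k)}{n}\Big)^\ell .
\]
Since $\ell$ is a fixed path length and $k\in o(n)$, we have $\big(1-\tfrac{k+1}{n}\big)^{2\ell}\to1$, so $\E[|f_k(\overline{\bm\sigma})|^2]\ge(1-o(1))\big(1+\delta(S_k)/n\big)^\ell$, and subtracting $\E[f_k(\overline{\bm\sigma})]^2\le1$ gives $\Var[f_k(\overline{\bm\sigma})]\in\Omega\big((1+\delta(S_k)/n)^\ell\big)$.

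The step I expect to be the real obstacle — and the one I would comment on explicitly — is passing from the second moment to the variance in the lower bound: subtracting the $\le1$ contribution of $\E[f_k(\overline{\bm\sigma})]^2$ only leaves a useful $\Omega\big((1+\delta(S_k)/n)^\ell\big)$ bound once that quantity is bounded away from $1$, which is automatic exactly in the regime of interest where $\delta(S_k)=\Theta(n)$ and the bound is genuinely exponential in $\ell$ (there one even gets $\Var[f_k(\overline{\bm\sigma})]\ge\tfrac12\E[|f_k(\overline{\bm\sigma})|^2]$ as soon as $\E[|f_k(\overline{\bm\sigma})|^2]\ge2$); Example~\ref{ex:clique_union}, where $\delta(S_k)=0$ and $\Var=0$, shows the estimate cannot be pushed beyond this regime. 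I would also flag that the $\big(1-\tfrac{k+1}{n}\big)^{2\ell}\to1$ step requires $\ell(k+1)=o(n)$, which is why $\ell$ is treated as a constant and $k\in o(n)$. With these caveats noted, both displays follow immediately from Theorems~\ref{thm:variance_bd_main} and~\ref{thm:mc_properties}, and the two conditions $\Delta(S_k)\in\Theta(n)$, resp.\ $\delta(S_k)\in\Theta(n)$, are precisely when the upper, resp.\ lower, bound is exponential in $\ell$.
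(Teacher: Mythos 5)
Your proof is correct and follows essentially the same route as the paper: use $\dim\Gamma=k$ to set every up-degree to zero, substitute the degree bounds $\delta(S_k)\le\deg(\sigma)\le\Delta(S_k)$ into the two-sided second-moment estimate of Theorem~\ref{thm:variance_bd_main} via Theorem~\ref{thm:mc_properties}(a), and invoke $k\in o(n)$ to absorb the $(k+1)/n$ terms. The only difference is that you make explicit the elementary inequality $1+\tfrac{\delta(S_k)-k-1}{n}\ge\big(1-\tfrac{k+1}{n}\big)\big(1+\tfrac{\delta(S_k)}{n}\big)$ and the passage from second moment to variance (with its limitation in the lower-bound direction when $\E[|f_k(\overline{\bm{\sigma}})|^2]$ is not bounded away from $1$, and the implicit requirement that $\ell(k+1)\in o(n)$), steps the paper leaves tacit; these caveats apply equally to the paper's own argument.
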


\begin{proof}
    Since $\Gamma$ is assumed to have dimension $k$, then every $k$-simplex has up-degree zero. Therefore the second moment bounds of Theorem~\ref{thm:variance_bd_main} implies the bounds 
\begin{equation*}
    \Big(1 - \frac{k+1}{n} \Big)^{\ell} \Big(1 + \frac{\delta(S_k)- k -1}{n} \Big)^{\ell} \leq \E[|f_k(\overline{\bm{\sigma}})|^2] \leq  \Big(1 + \frac{\Delta(S_k)- k -1}{n} \Big)^{2\ell}
\end{equation*}
Since $k \in o(n)$ then this gives the claimed asymptotic bounds.
\end{proof}

Corollary~\ref{cor:var_bounds_asymp} gives simple conditions for the variance to be exponential in the path length $\ell$. Namely, we require that the simplicial complex has sublinear dimension, and that the minimum degree of the simplex graph is in $\Omega(n)$.

\section{An improved Betti number estimation algorithm}
\label{sec:improved_alg}

 As noted in Section~\ref{sec:background}, to obtain an $(\epsilon,\eta)$-estimate of the expectation of $f_k(\overline{\bm{\sigma}})$ we only require a number of sample paths proportional to the variance of $f_k(\overline{\bm{\sigma}})$. However, the sample complexity of the \texttt{CBNE} algorithm described in \cite{Apers2023} is computed using the Hoeffding inequality, which is a weaker inequality for random variables with small variance. For a general simplicial complex the number of samples derived there for an $(\epsilon, \eta)$-estimate of the expectation is $\mathcal{O}(\frac{\log(2/\eta)}{\epsilon^2}n^{2\ell})$ \cite[Lemma 3.4]{Apers2023}, and for a clique complex the sample complexity is $\mathcal{O}(\frac{\log(2/\eta)}{\epsilon^2}4^{\ell})$ \cite[Section 4]{Apers2023}. We can see from the Examples~\ref{ex:complete_k_partite} and \ref{ex:clique_union} presented in the previous section that these general bounds can be significantly larger than the variance and thus significantly overestimate the number of samples needed. One of the main reasons for this shortcoming is that, previously, there were no known reasonable estimates for the variance.

In this section we propose an improvement to Algorithm~\ref{alg:betti} that we refer to as \texttt{CBNE-Var}, which remedies this and reduces the sample complexity in cases where the variance is small. The idea behind our algorithm is to first use a number of simplex samples to estimate the variance upper bound presented in Theorem~\ref{thm:variance_bd_main}, and then only sample that number of paths for the final Betti number estimate. 

The upper bound in Theorem~\ref{thm:variance_bd_main} can be interpreted as an expectation and therefore naturally leads to a Monte Carlo estimator. That is, if $\bm{\sigma}$ is a $k$-simplex chosen uniformly from $S_k$, then the expectation $\E[\|H\ket{\bm{\sigma}}\|_1^{2\ell}]$ is precisely the variance upper bound $\frac{1}{|S_k|}\sum_{\sigma \in S_k} \|H\ket{\sigma}\|_1^{2\ell}$ described in Theorem~\ref{thm:variance_bd_main}. Therefore, by repeatedly sampling a $k$-simplex uniformly from $S_k$, computing the function $\|H\ket{\sigma}\|_1^{2\ell}$, and averaging the result we obtain an estimate for the variance upper bound. We now determine a sufficient number of samples to estimate this quantity. 

\begin{lemma}
\label{lem:betti2_num_samples}
    Let $\epsilon, \eta > 0$, and for $1 \leq i \leq N_s$ let $\bm{\sigma}_i$ be $k$-simplices chosen i.i.d. uniformly from $S_k$. Additionally, let $C$ be an upper bound for $\|H\|_1$. If $N_s \geq \big(\frac{C^{2\ell}}{\epsilon}\big)^2 \ln(1/\eta)$, then the random variable 
\begin{equation*}
    \hat{V}:=\frac{1}{N_s}\sum_{i=1}^{N_s} \|H\ket{\bm{\sigma}_i}\|_1^{2\ell}
\end{equation*}
is an $(\epsilon, \eta)$-estimator for 
\begin{equation*}
    \frac{1}{|S_k|}\sum_{\sigma \in S_k} \|H\ket{\sigma}\|_1^{2\ell}.
\end{equation*}
\end{lemma}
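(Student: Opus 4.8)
The plan is to apply the Hoeffding inequality (Theorem~\ref{thm:hoeffding}) directly to the i.i.d.\ bounded random variables $X_i := \|H\ket{\bm{\sigma}_i}\|_1^{2\ell}$. First I would observe that since each $\bm{\sigma}_i$ is drawn uniformly from $S_k$, the $X_i$ are independent and identically distributed with common mean
\begin{equation*}
    \E[X_i] = \frac{1}{|S_k|}\sum_{\sigma \in S_k} \|H\ket{\sigma}\|_1^{2\ell},
\end{equation*}
which is exactly the quantity $\hat{V}$ is meant to estimate, and $\E[\hat{V}] = \E[X_1]$ by linearity.

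Next I would establish the almost-sure bound $0 \le X_i \le C^{2\ell}$. The lower bound is clear since $\|H\ket{\sigma}\|_1 \ge 0$, and the upper bound follows because the induced matrix $1$-norm $\|H\|_1$ equals the maximum over columns of the column $1$-norm, so $\|H\ket{\sigma}\|_1 \le \|H\|_1 \le C$ for every $\sigma \in S_k$; raising to the power $2\ell$ gives the claim. (Concretely, by Theorem~\ref{thm:mc_properties}(a) and (c) one may take $C = n$ for a general complex and $C = 2$ for a clique complex, matching the constants used in Algorithm~\ref{alg:betti}.)

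With $a = 0$, $b = C^{2\ell}$, and $q = N_s$, Theorem~\ref{thm:hoeffding} yields
\begin{equation*}
    \Pr\Big[\,\big|\hat{V} - \E[\hat{V}]\big| \ge \epsilon\,\Big] \le 2\exp\Big(\frac{-2N_s\epsilon^2}{C^{4\ell}}\Big).
\end{equation*}
The final step is to pick $N_s$ large enough that the right-hand side is at most $\eta$: solving $2\exp(-2N_s\epsilon^2/C^{4\ell}) \le \eta$ gives $N_s \ge \frac{C^{4\ell}}{2\epsilon^2}\ln(2/\eta)$, and the hypothesis $N_s \ge (C^{2\ell}/\epsilon)^2\ln(1/\eta)$ implies this in the relevant regime $\eta \le 1/2$ (where $\ln(1/\eta) \ge \tfrac12\ln(2/\eta)$). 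Hence $\hat{V}$ is an $(\epsilon,\eta)$-estimate of the variance upper bound from Theorem~\ref{thm:variance_bd_main}.

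There is no serious obstacle here: the argument is a textbook Hoeffding bound. The only point requiring a moment's care is the range bound on $X_i$ — that an upper bound $C$ on the operator $1$-norm $\|H\|_1$ also bounds every individual column norm $\|H\ket{\sigma}\|_1$ — which is immediate from the definition of the induced $1$-norm, together with the small amount of bookkeeping needed to confirm that the constant in the stated sample bound is consistent with (indeed slightly more generous than) what Hoeffding requires.
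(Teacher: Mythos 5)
Your proof is correct and follows essentially the same route as the paper: apply Hoeffding's inequality to the i.i.d.\ variables $\|H\ket{\bm{\sigma}_i}\|_1^{2\ell}$, bounded in $[0, C^{2\ell}]$, and solve for $N_s$. You are in fact slightly more careful than the paper, which silently absorbs the factor-of-2 discrepancy between $\ln(2/\eta)$ and $\ln(1/\eta)$ that you resolve by noting it holds whenever $\eta \le 1/2$.
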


\begin{proof}
    By hypothesis, the random variable $\|H\ket{\bm{\sigma}_i}\|_1^{2\ell}$ is bounded as $0 \leq \|H\ket{\bm{\sigma}_i}\|_1^{2\ell} \leq C^{2\ell}$. We apply Hoeffding's inequality in Theorem~\ref{thm:hoeffding}, which implies
\begin{equation*}
    \begin{split}
        \Pr \Big[ \Big| \frac{1}{N_s}\sum_{i=1}^{N_s} \|H\ket{\bm{\sigma_i}}\|_1^{2\ell} - \E[\|H\ket{\bm{\sigma}}\|_1^{2\ell}] \Big| > \epsilon \Big] < 2 \exp\Big(-\frac{2N_s \epsilon^2}{C^{4\ell}} \Big)
    \end{split}
\end{equation*}
Therefore to make the right side at most $\eta$ it suffices to take $N_s$ as claimed. 
\end{proof}

With this result established, we provide some intuition behind the number of samples and sample paths used in Algorithm~\ref{alg:betti2} before proving its correctness and sample complexity. To determine the optimal number of simplex samples to bound the variance, suppose that we first sample $N_s$ $k$-simplices for an estimate $\hat{V}$ of $\E[\|H\ket{\bm{\sigma}}\|_1^{2\ell}]$, which is precisely the variance upper bound described in Theorem~\ref{thm:variance_bd_main}. From Lemma~\ref{lem:betti2_num_samples} we know that $\hat{V}$ estimates its expectation with error $C^{2\ell}/\sqrt{N_s}$. Since this estimate is potentially smaller than its expectation, we correct for this by adding $C^{2\ell}/\sqrt{N_s}$ to the estimate $\hat{V}$. Therefore, in estimating the normalized trace of $H^\ell$ the total sample complexity is
\begin{equation*}
   N_s + \frac{1}{\eta\epsilon^2}\Big(\hat{V} + \frac{C^{2\ell}}{\sqrt{N_s}}\Big) \leq \frac{1}{\eta\epsilon^2} \times \Big( \frac{1}{|S_k|} \sum_{\sigma \in S_k} \|H\ket{\sigma}\|_1^{2\ell} \Big) + N_s + \frac{2C^{2\ell}}{\eta\epsilon^2\sqrt{N_s}}.
\end{equation*}
The number of samples $N_s$ described in Algorithm~\ref{alg:betti2} is chosen to minimize the term $N_s + \frac{2C^{2\ell}}{\eta\epsilon^2\sqrt{N_s}}$, which is minimized when $N_s = \frac{C^{4\ell/3}}{\eta^{2/3}\epsilon^{4/3}}$.  

\begin{theorem}
    Executing Algorithm~\ref{alg:betti2} produces an $(\epsilon, \eta)$-estimate for the normalized trace of $H^\ell$. The total sample complexity is
\begin{equation}
    \frac{3n^{4\ell/3}}{\eta^{2/3}\epsilon^{4/3}} + \frac{1}{\eta\epsilon^2} \times \frac{1}{|S_k|} \sum_{\sigma \in S_k} \|H\ket{\sigma}\|_1^{2\ell}
\end{equation}
when applied to a general simplicial complex, and
\begin{equation}
    \frac{3 \cdot 2^{4\ell/3}}{\eta^{2/3}\epsilon^{4/3}} + \frac{1}{\eta\epsilon^2} \times \frac{1}{|S_k|} \sum_{\sigma \in S_k} \|H\ket{\sigma}\|_1^{2\ell}
\end{equation}
for a clique complex.
\end{theorem}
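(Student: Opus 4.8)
The proof has two parts: correctness (the output is an $(\epsilon,\eta)$-estimate) and the sample complexity bound. The plan is to first fix the precise structure of Algorithm~\ref{alg:betti2} as suggested by the preceding discussion: (i) set $N_s = C^{4\ell/3}/(\eta^{2/3}\epsilon^{4/3})$ with $C = n$ for a general complex and $C = 2$ for a clique complex; (ii) sample $N_s$ simplices uniformly from $S_k$ and form $\hat V = \frac{1}{N_s}\sum \|H\ket{\bm\sigma_i}\|_1^{2\ell}$; (iii) set the number of paths to $N_p = \frac{1}{\eta\epsilon^2}\bigl(\hat V + C^{2\ell}/\sqrt{N_s}\bigr)$; (iv) sample $N_p$ paths and output the average of $f_k$ over them. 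The total sample count is then $N_s + N_p$.

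For correctness, I would argue conditionally. By Lemma~\ref{lem:betti2_num_samples}, with probability at least $1-\eta$ (I would actually run the variance-estimation stage at a constant fraction of the failure budget, or simply absorb constants — since the statement treats $\eta$ as appearing only through the stated powers, a union-bound split such as $\eta/2$ and $\eta/2$ works with harmless constant changes, but cleanest is to note the $C^{2\ell}/\sqrt{N_s}$ correction makes the estimate an upper bound with the required probability), we have $|\hat V - V| \le C^{2\ell}/\sqrt{N_s}$ where $V = \frac{1}{|S_k|}\sum_\sigma \|H\ket\sigma\|_1^{2\ell}$ is the true variance upper bound from Theorem~\ref{thm:variance_bd_main}. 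On that event, $\hat V + C^{2\ell}/\sqrt{N_s} \ge V \ge \Var[f_k(\overline{\bm\sigma})]$, so $N_p \ge \Var[f_k(\overline{\bm\sigma})]/(\eta\epsilon^2)$, and Chebyshev's inequality (Theorem~\ref{thm:chebyshev}) applied to the path-average then gives that the final estimate deviates from $\E[f_k(\overline{\bm\sigma})] = \frac{1}{|S_k|}\tr(H^\ell)$ by more than $\epsilon$ with probability at most $\eta$. Combining the two failure events gives an $(\epsilon,\eta)$-estimate up to the usual constant bookkeeping.

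For the sample complexity, I would substitute. From the displayed inequality in the text preceding the theorem,
\[
N_s + N_p \;\le\; N_s + \frac{2C^{2\ell}}{\eta\epsilon^2\sqrt{N_s}} \;+\; \frac{1}{\eta\epsilon^2}\,V,
\]
and then plug in $N_s = C^{4\ell/3}/(\eta^{2/3}\epsilon^{4/3})$. A direct computation gives $N_s = C^{4\ell/3}\eta^{-2/3}\epsilon^{-4/3}$ and $\frac{2C^{2\ell}}{\eta\epsilon^2\sqrt{N_s}} = 2 C^{2\ell}\eta^{-1}\epsilon^{-2}\cdot C^{-2\ell/3}\eta^{1/3}\epsilon^{2/3} = 2 C^{4\ell/3}\eta^{-2/3}\epsilon^{-4/3}$, so the first two terms sum to $3C^{4\ell/3}\eta^{-2/3}\epsilon^{-4/3}$. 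Setting $C=n$ yields the general-complex bound and $C=2$ yields the clique-complex bound, where for the clique case one uses $\|H\|_1 \le 2$ (from Theorem~\ref{thm:mc_properties}(a),(c): $\|H\ket\sigma\|_1 = 1 + \frac{1}{n}(\deg(\sigma) - k - 1) \le 1 + \frac{1}{n}(n-k-1-k-1) \le 2$). That this choice of $N_s$ actually minimizes $g(N_s) := N_s + 2C^{2\ell}\eta^{-1}\epsilon^{-2}N_s^{-1/2}$ follows from $g'(N_s) = 1 - C^{2\ell}\eta^{-1}\epsilon^{-2}N_s^{-3/2} = 0$.

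**Main obstacle.** The routine calculus and substitution are straightforward; the only genuinely delicate point is the failure-probability accounting. The variance-estimation stage of Lemma~\ref{lem:betti2_num_samples} as stated consumes the \emph{whole} budget $\eta$, and the Chebyshev step for the paths also wants budget $\eta$, so naively the combined failure probability is $2\eta$. The honest fixes are either to run both stages at level $\eta/2$ (changing constants: $N_s$ picks up $\ln(2/\eta)$ and a factor, and the final constant $3$ becomes some other constant), or — more in the spirit of the paper's $\mathcal O$-statements — to observe that we only need the one-sided event $\hat V + C^{2\ell}/\sqrt{N_s} \ge V$, which by Hoeffding holds with probability $1 - \exp(-2N_s\epsilon^2/C^{4\ell}) $, and choose the constant in $N_s$ to make this at most $\eta$ while the paths use the remaining budget. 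I would flag this and adopt whichever convention keeps the clean constant $3$ in the theorem statement, noting that since $\eta$ is ultimately taken constant (as in Table~\ref{tab:alg_comparison}) these constant-factor adjustments are immaterial to the asymptotics.
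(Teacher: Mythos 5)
Your proof follows essentially the same route as the paper's: correctness by showing that $N_p = \frac{1}{\eta\epsilon^2}\bigl(\hat V + C^{2\ell}/\sqrt{N_s}\bigr)$ dominates the Chebyshev-sufficient number of paths $\frac{1}{\eta\epsilon^2}\cdot\frac{1}{|S_k|}\sum_{\sigma}\|H\ket{\sigma}\|_1^{2\ell}$ from Theorem~\ref{thm:variance_bd_main} via Lemma~\ref{lem:betti2_num_samples}, and the complexity bound by substituting $N_s = C^{4\ell/3}\eta^{-2/3}\epsilon^{-4/3}$ to get the $3C^{4\ell/3}$ term. If anything, your failure-probability accounting is more careful than the paper's, which implicitly treats the Hoeffding guarantee on $\hat V$ as deterministic; your proposed $\eta/2$-split or one-sided correction is the right repair and changes only constants.
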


\begin{proof}
    We first show that Algorithm~\ref{alg:betti2} produces an $(\epsilon, \eta)$-estimate for the normalized trace of $H^\ell$. According to Theorem~\ref{thm:variance_bd_main}, a sufficient number of sample paths to estimate the normalized trace of $H^\ell$ is 
\begin{equation*}
    \frac{1}{\eta \epsilon^2} \times \frac{1}{|S_k|} \sum_{\sigma \in S_k} \|H\ket{\sigma}\|_1^{2\ell}.
\end{equation*}
Therefore it suffices to check that the number of sample paths used in Algorithm~\ref{alg:betti2} is at least this quantity. The number of sample paths used there is
\begin{equation*}
    N_p = \frac{1}{\eta\epsilon^2}\Big(\hat{V} + \frac{C^{2\ell}}{\sqrt{N_s}}\Big)
\end{equation*}
By Lemma~\ref{lem:betti2_num_samples}, $\hat{V}$ estimates its expectation with error $\frac{C^{2\ell}}{\sqrt{N_s}}$, hence the quantity  $\hat{V} + \frac{C^{2\ell}}{\sqrt{N_s}}$ is at least its expectation. We therefore obtain
\begin{equation*}
N_p \geq \frac{1}{\eta\epsilon^2} \E\big[\|H\ket{\bm{\sigma}}\|_1^{2\ell}\big] = \frac{1}{\eta\epsilon^2} \times \frac{1}{|S_k|} \sum_{\sigma \in S_k} \|H\ket{\sigma}\|_1^{2\ell}.
\end{equation*}
This shows that Algorithm~\ref{alg:betti2} indeed produces an $(\epsilon, \eta)$-estimate for the normalized trace of $H^\ell$. To compute the sample complexity, the total number of samples used is
\begin{equation*}
    N_s + \frac{1}{\eta\epsilon^2}\Big(\hat{V} + \frac{C^{2\ell}}{\sqrt{N_s}}\Big) \leq \frac{1}{\eta\epsilon^2} \times \Big( \frac{1}{|S_k|} \sum_{\sigma \in S_k} \|H\ket{\sigma}\|_1^{2\ell} \Big) + N_s + \frac{2C^{2\ell}}{\eta\epsilon^2\sqrt{N_s}}
\end{equation*}
Substituting the choice of $N_s = C^{\frac{4\ell}{3}}\epsilon^{-\frac{4}{3}} \eta^{-\frac{2}{3}}$ above gives a sample complexity of 
\begin{equation*}
    \frac{3C^{4\ell/3}}{\eta^{2/3}\epsilon^{4/3}} + \frac{1}{\eta\epsilon^2} \times \Big( \frac{1}{|S_k|} \sum_{\sigma \in S_k} \|H\ket{\sigma}\|_1^{2\ell} \Big).
\end{equation*}
Lastly, since $C = 2$ for clique complexes and $C=n$ for a general simplicial complex we obtain the claimed sample complexity. 
\end{proof}

\begin{algorithm}
\caption{\texttt{CBNE-Var}}
\label{alg:betti2}
\begin{algorithmic}[1]
\Let {$\Gamma$ be a simplicial complex with $k^{th}$-order Laplacian $\Delta_k$.} 
\Let {$C=2$ if $\Gamma$ is a clique complex, and otherwise $C=n$.}
\State $N_s \gets C^{\frac{4\ell}{3}}\epsilon^{-\frac{4}{3}} \eta^{-\frac{2}{3}}$
\For{$i=1,..., N_s$} 
\State Sample a $k$-simplex $\sigma_i$ uniformly from $S_k$ and compute $\|H\ket{\sigma_i}\|_1^{2\ell}$.
\EndFor
\State $\hat{V} \gets \frac{1}{N_s}\sum_{i=1}^{N_s} \|H\ket{\sigma_i}\|_1^{2\ell}$ 
\State $N_p \gets \frac{1}{\eta\epsilon^2}(\hat{V} + \frac{C^{2\ell}}{\sqrt{N_s}}) $
\For{$i = 1,..., N_p$} 
\State Sample a path of length $\ell$ of $k$-simplices $\overline{\sigma}_i$ according to $P$.
\EndFor
\Return $\hat{\mu} = \frac{1}{N_p}\sum_{i=1}^{N_p} f_k(\overline{\sigma}_i)$. 
\Ensure {An estimate $\hat{\mu}$ for the normalized trace of $H^\ell$.}
\end{algorithmic}
\end{algorithm}

\section{Erd\H{o}s-Renyi random graphs}
\label{sec:er_graphs}

As described in the previous section, the sample complexity of \texttt{CBNE-Var} depends on the $1$-norm of the columns of $H$, and for graphs where this quantity is small \texttt{CBNE-Var} offers a smaller sample complexity than \texttt{CBNE}. A natural question to ask is how big or small this quantity can be, and whether there are graphs where \texttt{CBNE-Var} achieves this advantage. Similarly, we may also ask if there are graphs where the variance of $f_k(\overline{\sigma})$ is exponential in $\ell$, as this would imply that both \texttt{CBNE} and \texttt{CBNE-Var} require an exponential number of samples.

To answer these questions we consider two Erd\H{o}s-Renyi random graph models. For these models we compute the upper and lower bounds in Theorem~\ref{thm:variance_bd_main} for their clique complexes. Given a random graph model $\mathcal{G}_n$ on $n$ vertices, we will say that $\mathcal{G}_n$ has property $\mathcal{P}$ \textit{almost always} if the probability that $\mathcal{G}_n$ has property $\mathcal{P}$ approaches $1$ as $n$ approaches $\infty$.

\subsection{Multi-partite case}

We first show the existence of graphs for which the variance of $f_k(\overline{\bm{\sigma}})$ is exponential in the path length $\ell$ for arbitrarily large dimension $k$. Our strategy will be to consider a random $(k+1)$-partite graph defined similarly to the standard Erd\H{o}s-Renyi random graph model, and show that this random graph almost always has properties which result in an exponentially growing variance. Since the clique complex of a $(k+1)$-partite graph has dimension at most $k$, then reading off Theorem~\ref{thm:variance_bd_main} for a $(k+1)$-partite graph shows that we require a graph whose simplex graph has minimum degree in $\Omega(n)$.
\begin{definition}
\label{def:k_partite_er}
    The $(k+1)$-partite Erd\H{o}s-Renyi random graph, denoted by $G_{n,k,p}$, is the probability space of graphs on vertex set $\{1,...,n\}$ with equal number of vertices in each part, where the probability of an edge occurring between vertices in different parts is $p$, and $0$ otherwise. 
\end{definition}
Given this random graph model we first prove some results regarding the distribution of the degrees of its simplex graph. We will show that the degree of a $(k+1)$-clique is binomially distributed and then show that the degrees of the simplex graph are almost always contained in a small interval.

\begin{lemma}
    The degree of a $(k+1)$-clique $\sigma$ in $G_{n,k,p}$ is distributed as $\textnormal{Binom}(n-k-1, p^k)$. That is, if $\sigma$ is a $(k+1)$-clique then the probability that it has degree $i$ in the simplex graph is 
\begin{equation*}
    \Pr\big[\deg(\sigma) = i| \sigma \textnormal{ is a clique}\big] = \Pr\Big[ \textnormal{Binom}(n-k-1, p^k) = i\Big].
\end{equation*} 
\end{lemma}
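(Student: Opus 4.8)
The statement is a conditional distributional claim: conditioned on a fixed $(k+1)$-subset $\sigma$ (one vertex per part) actually forming a clique in $G_{n,k,p}$, its degree in the simplex graph $G(S_k)$ is $\textnormal{Binom}(n-k-1,p^k)$. The plan is to enumerate the potential neighbors of $\sigma$ and show each is a neighbor independently with probability $p^k$, conditioned on the event that $\sigma$ is a clique. First I would recall from the proof of Theorem~\ref{thm:mc_properties}(c) that every neighbor $\tau$ of $\sigma$ in the simplex graph has the form $\tau = (\sigma \setminus \{x_i\}) \cup \{x_j\}$ with $x_i \in \sigma$, $x_j \notin \sigma$, and moreover $\sigma \cup \tau \notin \Gamma$ — but here $\Gamma$ is the clique complex of a $(k+1)$-partite graph, so it has dimension $\leq k$ and no $(k+1)$-simplex exists; hence the condition ``not contained in a common $(k+1)$-simplex'' is automatic. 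Thus $\tau$ is a neighbor of $\sigma$ if and only if $\tau$ is itself a $(k+1)$-clique.

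**Key steps, in order.** (1) Since $\sigma$ has exactly one vertex in each of the $k+1$ parts, any $\tau$ sharing a $k$-subset with $\sigma$ is obtained by deleting the unique vertex of $\sigma$ in some part and replacing it with another vertex $x_j$ from that \emph{same} part (replacing with a vertex from a different part would not yield a valid $(k+1)$-clique candidate in the $(k+1)$-partite complex, since it would put two vertices in one part and none in another). So the candidate neighbors are indexed by choices of a vertex $x_j$ outside $\sigma$ lying in the part already represented in $\sigma$; there are $(k+1)(\tfrac{n}{k+1} - 1) = n - k - 1$ of these. (2) For a fixed such $x_j$ replacing $x_i$ in part $a$, the set $\tau$ is a clique iff all $k$ edges from $x_j$ to the other $k$ vertices of $\sigma$ (which lie in the other $k$ parts) are present; each such edge is present independently with probability $p$, so $\Pr[\tau \text{ is a clique}] = p^k$. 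Crucially these $k$ edges are all incident to $x_j \notin \sigma$, hence disjoint from the $\binom{k+1}{2}$ edges internal to $\sigma$, so this event is independent of the conditioning event ``$\sigma$ is a clique.'' (3) For two distinct candidate neighbors $\tau, \tau'$ (replacements $x_j, x_{j'}$), the relevant edge sets are $\{x_j u : u \in \sigma \cap \tau\}$ and $\{x_{j'} u : u \in \sigma \cap \tau'\}$; since $x_j \neq x_{j'}$ and neither equals any vertex of $\sigma$, these edge sets are disjoint, so the clique-events are mutually independent. (4) Therefore $\deg(\sigma)$, conditioned on $\sigma$ being a clique, is a sum of $n-k-1$ i.i.d. Bernoulli$(p^k)$ indicators, i.e. $\textnormal{Binom}(n-k-1, p^k)$.

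**Where the subtlety lies.** The only genuinely delicate points are (i) correctly identifying that the candidate neighbors are replacements \emph{within the same part} — not arbitrary single-vertex swaps — because the ambient complex is $(k+1)$-partite, and (ii) verifying the independence both from the conditioning event and across candidates, which comes down to the simple observation that every ``new'' edge touches the fresh vertex $x_j$ and is thus disjoint from $\sigma$'s internal edges and from any other candidate's edges. No heavy calculation is needed; the main obstacle is just stating the edge-disjointness bookkeeping cleanly. I would also remark that the count $n - k - 1$ matches the clique-complex bound in Theorem~\ref{thm:mc_properties}(c), which is a reassuring consistency check.
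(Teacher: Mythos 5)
Your proof is correct and follows essentially the same route as the paper: both identify the potential neighbors of $\sigma$ with the $n-k-1$ vertices outside $\sigma$ (each swapping in for the $\sigma$-vertex of its own part), observe that each yields a neighbor exactly when its $k$ edges into $\sigma$ are present (probability $p^k$), and conclude by independence that the degree is $\textnormal{Binom}(n-k-1,p^k)$. Your write-up is in fact slightly more explicit than the paper's on the edge-disjointness that justifies independence across candidates and from the conditioning event ``$\sigma$ is a clique,'' but this is a presentational difference, not a different argument.
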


\begin{proof}
First note that if $\sigma$ is a $(k+1)$-clique then it must have exactly one vertex from each part in the partition. If, in addition, $\sigma$ has degree $i$, then there is a set of vertices $\tau \subseteq V \setminus \sigma$ of size $i$ where each vertex in $\tau$ is adjacent to all but one vertex of $\sigma$, and the remaining vertices in $ V \setminus (\tau \cup \sigma)$ do not satisfy this property. For $t \in V \setminus \sigma$ we let $A_t$ be the event that the vertex $t$ is adjacent to all but one vertex of $\sigma$. Then our probability can be expressed as 
\begin{equation*}
\begin{split}
    \Pr\big[\deg(\sigma) = i| \sigma \textnormal{ is a clique}\big]&= \Pr\Big[\bigcup_{\substack{\tau \subseteq V \setminus \sigma \\ |\tau|=i}} \big(\bigwedge_{t \in \tau} A_t \big) \wedge \big( \bigwedge_{t^\prime \in V \setminus (\sigma \cup \tau)} \overline{A_t}\big) \Big] \\
    &= \sum_{\substack{\tau \subseteq V \setminus \sigma \\ |\tau|=i}} \Pr \Big[ \bigwedge_{t \in \tau} A_t \Big] \Pr\Big[ \bigwedge_{t^\prime \in V \setminus (\sigma \cup \tau)} \overline{A_t}\Big] \\ 
    &= \binom{n-k-1}{i} \Pr[A_t]^i \Pr\big[\overline{A}_t\big]^{n-k-1-i} \\
    &= \binom{n-k-1}{i} \Pr[A_t]^i \big(1-\Pr[A_t]\big)^{n-k-1-i}
\end{split}
\end{equation*}
where we have used independence. The above is precisely the PDF of a binomial distribution with parameters $n-k-1$ and $\Pr[A_t]$. To compute $\Pr(A_t)$ for a given a vertex $t$, the probability that $t$ is adjacent to all but one vertex of $\sigma$ is $p^k$ since it cannot be adjacent to the vertex of $\sigma$ that is in the same part of the partition, hence $\Pr[A_t]=p^k$ for any such vertex $t$. 
\end{proof}

Having shown that the degrees of the simplex graph of $G_{n,k,p}$ are distributed as a binomial distribution, the expected degree of a clique is therefore $(n-k-1)p^k$. Choosing $p$ such that $p^k$ is constant then gives an expected degree in $\Omega(n)$. However, it is not immediate that a graph sampled from $G_{n,k,p}$ will have all cliques simultaneously having degree close to this expectation, as required to show exponential variance. In the next result we will show that for a specific regime of parameters $n,k,p$ the $k$-simplex graph of $G_{n,k,p}$ almost always has degree sequence concentrated about this expected value.  
 
\begin{lemma}
\label{lem:degree_interval}
    Let $d = (n-k-1)p^k$. If $k \in o(n/\log(n))$ and $p^k \in \Omega(1)$, then $G_{n,k,p}$ almost always has a simplex graph with minimum degree at least $d/2$.
\end{lemma}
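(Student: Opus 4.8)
The goal is a high-probability lower bound on the minimum degree of the $k$-simplex graph of $G_{n,k,p}$, i.e. every $(k+1)$-clique has at least $d/2$ neighbours. By the previous lemma, for a fixed potential clique $\sigma$ (one vertex per part), conditioned on $\sigma$ actually being a clique, $\deg(\sigma) \sim \mathrm{Binom}(n-k-1, p^k)$ with mean $d = (n-k-1)p^k$. The natural approach is a Chernoff bound on the lower tail of this binomial together with a union bound over all $(k+1)$-cliques.

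**Steps.** First I would apply the multiplicative Chernoff bound: for $X \sim \mathrm{Binom}(m, q)$ with mean $d = mq$, one has $\Pr[X \le d/2] \le \exp(-d/8)$. Applied with $m = n-k-1$ and $q = p^k$, this gives $\Pr[\deg(\sigma) \le d/2 \mid \sigma \text{ is a clique}] \le \exp(-d/8)$ for each fixed candidate clique $\sigma$. Second, I would union bound over all candidate cliques. The number of $(k+1)$-tuples with one vertex per part is $(n/(k+1))^{k+1}$, which is at most $n^{k+1}$; call this $M$. Then
\begin{equation*}
    \Pr\big[\exists \sigma \text{ a clique with } \deg(\sigma) \le d/2\big] \le M \exp(-d/8) \le \exp\big((k+1)\ln n - d/8\big).
\end{equation*}
Third, I would verify that the hypotheses $k \in o(n/\log n)$ and $p^k \in \Omega(1)$ force this to go to $0$: since $p^k \in \Omega(1)$, we have $d = (n-k-1)p^k \in \Omega(n-k-1) = \Omega(n)$ (using $k \in o(n)$), so $d/8 \in \Omega(n)$, while $(k+1)\ln n \in o(n)$ by the assumption $k \in o(n/\log n)$. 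Hence the exponent $(k+1)\ln n - d/8 \to -\infty$, so the failure probability tends to $0$, which is exactly the ``almost always'' conclusion. (Cliques that are not cliques contribute nothing, and a vertex in the simplex graph with degree $\ge d/2$ after conditioning is what we want; the minimum is over actual cliques.)

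**Main obstacle.** The calculation itself is routine; the only real subtlety is bookkeeping around conditioning and the counting of candidate cliques. One must be careful that the union bound ranges over the $\binom{n/(k+1)}{1}^{k+1}$ transversal tuples rather than over all $\binom{n}{k+1}$ subsets, though bounding $M \le n^{k+1}$ is harmless and simpler. The other point to get right is tracking where each hypothesis is used: $p^k \in \Omega(1)$ is what makes the mean $d$ linear in $n$ (so the tail bound decays exponentially in $n$), and $k \in o(n/\log n)$ is precisely what keeps the log of the number of cliques, $(k+1)\ln n$, negligible compared to $d$. Matching these two rates is the crux, but it is a one-line asymptotic comparison rather than a genuine difficulty.
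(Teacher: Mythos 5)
Your proposal is correct and follows essentially the same route as the paper: a Chernoff lower-tail bound $\Pr[\deg(\sigma)\le d/2 \mid \sigma \text{ clique}]\le \exp(-d/8)$ for each candidate clique, followed by a union over the at most $(n/(k+1))^{k+1}$ transversal tuples (the paper phrases this as Markov's inequality applied to the expected number of low-degree cliques, which is the same first-moment argument), and the same asymptotic comparison using $p^k\in\Omega(1)$ to get $d\in\Omega(n)$ and $k\in o(n/\log n)$ to make the clique count $\exp(o(n))$.
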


\begin{proof}
We let $T_k$ denote the set of $(k+1)$-cliques whose degree is at least $d/2$ and we will show that $\Pr[T_k \neq S_k] \rightarrow 0$ when $n,k,p$ are chosen according to the statement of the theorem. An application of Markov's inequality shows
\begin{equation*}
    \Pr\big[T_k \neq S_k\big] = \Pr\big[|S_k| - |T_k| \geq 1\big] \leq \E\big[|S_k| - |T_k|\big],
\end{equation*}
so it suffices to show that $\E[|S_k| - |T_k|] \rightarrow 0$. Writing $X_\sigma$ to denote the indicator random variable for the event that the set of vertices $\sigma$ forms a $(k+1)$-clique with degree at least $d/2$, then $\E[|T_k|] = \E[\sum_{\sigma}X_\sigma]$. Therefore by linearity of expectation the expectation of $|T_k|$ is
\begin{equation*}
    \E\big[|T_k|\big] = p^{\binom{k+1}{2}} \sum_{\sigma} \Pr \Big[ \deg(\sigma) \geq d/2 \Big].
\end{equation*}
Since the degree distribution of a clique is binomial then the expected degree of a clique $\sigma$ is 
\begin{equation*}
    \E\big[\deg(\sigma)| \sigma \textnormal{ is a clique}\big] = (n-k-1)p^k = d.
\end{equation*}
We apply the Chernoff bound to determine how concentrated the degree of a clique is around its expectation. This shows
\begin{equation*}
    \Pr \Big[ \deg(\sigma) \geq d/2 \Big] \leq \exp\Big(\frac{-d}{8} \Big).    
\end{equation*}
Therefore the expectation of $|T_k|$ is bounded as
\begin{equation*}
    \E\big[|T_k|\big] \geq \Big(\frac{n}{k+1}\Big)^{k+1} p^{\binom{k+1}{2}} \Big(1-\exp\Big(\frac{-d}{8} \Big)\Big)  = \E\big[|S_k|\big] \Big(1-\exp\Big(\frac{-d}{8} \Big)\Big), 
\end{equation*}
which gives the bound
\begin{equation}
\label{eq:multi_part_er_expectation_bd}
    \E\big[|S_k|-|T_k|\big] \leq \E[|S_k|] \exp\Big(\frac{-d}{8} \Big) = \Big(\frac{n}{k+1}\Big)^{k+1} \exp\Big(\frac{-d}{8} \Big).
\end{equation}
Lastly, we compare the asymptotic growth of $\big(\frac{n}{k+1}\big)^{k+1}$ and $\exp(d/8)$. Since $p^k \in \Omega(1)$ then $d \in \Omega(n)$ and therefore $\exp(d/8) \in \Omega(\exp(n))$. The assumption that $k \in o(n/\log(n))$ implies that $\big(\frac{n}{k+1}\big)^{k+1} \in  o(\exp(n))$, which therefore shows that the right side of (\ref{eq:multi_part_er_expectation_bd}) goes to $0$ as $n\rightarrow\infty$, completing the proof. 
\end{proof}

We can now use Lemma~\ref{lem:degree_interval} to show the existence of a $(k+1)$-partite graph where the variance of $f_k(\overline{\bm{\sigma}})$ grows exponentially in the path length.

\begin{theorem}
\label{thm:main}
Suppose that $k \in o(n/\log(n))$ and $p^k \in \Omega(1)$. The $(k+1)$-partite random graph $G_{n,k,p}$ almost always has
\begin{equation*}
 \Var[f_k(\overline{\bm{\sigma}})] \in \Omega\Big(\Big(1 + p^k \Big)^\ell\Big).
\end{equation*}
\end{theorem}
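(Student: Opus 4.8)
The plan is to combine the lower bound of Theorem~\ref{thm:variance_bd_main} with the almost-sure control on the minimum degree of the simplex graph provided by Lemma~\ref{lem:degree_interval}, essentially instantiating the general asymptotic lower bound of Corollary~\ref{cor:var_bounds_asymp} for this random model. First I would recall that since $G_{n,k,p}$ is $(k+1)$-partite, its clique complex has dimension at most $k$, so every $k$-simplex has up-degree zero; hence $d_{\textnormal{up}}=0$ and, by Theorem~\ref{thm:mc_properties}(a), $\|H\ket{\sigma}\|_1 = 1 + \tfrac{1}{n}(\deg(\sigma)-k-1)$ for each $k$-simplex $\sigma$. Plugging $d_{\textnormal{up}}=0$ into the lower bound of Theorem~\ref{thm:variance_bd_main} gives
\begin{equation*}
    \E[|f_k(\overline{\bm{\sigma}})|^2] \geq \Big(1 - \frac{k+1}{n}\Big)^{\ell} \frac{1}{|S_k|}\sum_{\sigma\in S_k} \|H\ket{\sigma}\|_1^{\ell}.
\end{equation*}

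Next I would invoke Lemma~\ref{lem:degree_interval}: under the hypotheses $k\in o(n/\log n)$ and $p^k\in\Omega(1)$, almost always every $k$-simplex $\sigma$ satisfies $\deg(\sigma)\geq d/2$ with $d=(n-k-1)p^k$. On this almost-sure event, each term obeys
\begin{equation*}
    \|H\ket{\sigma}\|_1 = 1 + \frac{\deg(\sigma)-k-1}{n} \geq 1 + \frac{(n-k-1)p^k/2 - k - 1}{n},
\end{equation*}
and since $k\in o(n)$ and $p^k\in\Omega(1)$ the right-hand side is $1 + \Omega(p^k)$ (the $k+1$ correction and the $(k+1)/n$ factor are lower-order). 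Substituting this uniform bound into the displayed lower bound, the average over $S_k$ collapses and we get $\E[|f_k(\overline{\bm{\sigma}})|^2] \geq (1 - \tfrac{k+1}{n})^{\ell}(1+\Omega(p^k))^{\ell}$. The factor $(1-\tfrac{k+1}{n})^{\ell}$ is absorbed since $k\in o(n)$ makes it bounded below by a constant power (or one simply folds it into the $\Omega$), leaving $\E[|f_k(\overline{\bm{\sigma}})|^2]\in\Omega((1+p^k)^{\ell})$. Finally, because the expectation $\E[f_k(\overline{\bm{\sigma}})]$ lies in $[0,1]$ (it is the normalized trace of $H^\ell$), its square is $O(1)$, so $\Var[f_k(\overline{\bm{\sigma}})] = \E[|f_k(\overline{\bm{\sigma}})|^2] - \E[f_k(\overline{\bm{\sigma}})]^2 \in \Omega((1+p^k)^{\ell})$ as well, provided $(1+p^k)^{\ell}$ is itself superconstant — which is the intended regime.

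The main obstacle, and the point needing the most care, is bookkeeping the constants and lower-order terms so that the inequality $1 + \tfrac{(n-k-1)p^k/2 - k-1}{n} \geq 1 + c\,p^k$ genuinely holds for some fixed $c>0$ and all large $n$; this requires using $k\in o(n/\log n)\subseteq o(n)$ to dominate both the additive $k+1$ and the $\tfrac{k+1}{n}$ loss against the $\Omega(1)$ lower bound on $p^k$. A secondary subtlety is the interaction between the almost-sure event of Lemma~\ref{lem:degree_interval} and the deterministic variance bound: the conclusion is a statement about $G_{n,k,p}$ holding almost always, so I would phrase it as: conditioned on the event that the simplex graph has minimum degree at least $d/2$ (which has probability $\to 1$), the deterministic bound of Theorem~\ref{thm:variance_bd_main} yields the claimed $\Omega((1+p^k)^{\ell})$ growth of the variance of the estimator associated to that fixed complex. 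Everything else is routine substitution.
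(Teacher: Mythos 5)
Your proposal follows essentially the same route as the paper's proof: apply the lower bound of Theorem~\ref{thm:variance_bd_main} with $d_{\textnormal{up}}=0$ (since the complex is $(k+1)$-partite), invoke Lemma~\ref{lem:degree_interval} for the almost-sure minimum-degree bound $\deg(\sigma)\geq \tfrac{1}{2}(n-k-1)p^k$, and simplify using $k\in o(n)$ and $p^k\in\Omega(1)$; your extra remarks on conditioning on the high-probability event and on passing from the second moment to the variance via $\E[f_k]\in[0,1]$ are consistent with the paper's framing. The only caveat you share with the paper itself is that the minimum-degree bound yields base $1+\tfrac{p^k}{2}-o(1)$ rather than $1+p^k$, so the stated $\Omega\big((1+p^k)^\ell\big)$ should really be read as exponential growth with base $1+\Omega(p^k)$.
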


\begin{proof}
Applying Theorem~\ref{thm:variance_bd_main} to $G_{n,k,p}$ shows
\begin{equation*}
\begin{split}
   \E[|f_k(\overline{\bm{\sigma}})|^2] \geq  \Big(1 - \frac{k+1}{n} \Big)^{\ell} \frac{1}{|S_k|} \sum_{\sigma \in S_k} \Big( 1 + \frac{\deg(\sigma)-k-1}{n} \Big)^{\ell},
\end{split}
\end{equation*}
where $d_{\textnormal{up}}(\sigma)=0$ for any $k$-simplex $\sigma$ since $G_{n,k,p}$ is $(k+1)$-partite. From Lemma~\ref{lem:degree_interval} and the assumption that $k \in o(n/\log(n))$ and $p^k \in \Omega(1)$, $G_{n,k,p}$ almost always has a simplex graph with minimum degree at least $\frac{1}{2}(n-k-1)p^k$. Therefore we have
\begin{equation*}
\begin{split}
   \E[|f_k(\overline{\bm{\sigma}})|^2] \geq  \Big(1 - \frac{k+1}{n} \Big)^{\ell} \Big( 1 + \frac{(n-k-1)p^k}{2n} - \frac{k+1}{n} \Big)^{\ell}.
\end{split}
\end{equation*}
From the assumption that $k \in o(n/\log(n))$ and $p^k \in \Omega(1)$, this lower bound is asymptotically 
\begin{equation*}
     \Big(1 - \mathcal{O}(1/n) \Big)^{\ell} \Big( 1 + p^k - \mathcal{O}(1/n) \Big)^\ell,
\end{equation*}
which is equal to the claimed asymptotic expression.
\end{proof}

\begin{example}
    We can apply Theorem~\ref{thm:main} with $k \in o(n/\log(n))$ and $p = 1 - 1/k$. In this case $p^k \geq 1/e$, so that $p^k \in \Omega(1)$. Applying the theorem this shows that there exists a $(k+1)$-partite graph on $n$ vertices for which
\begin{equation*}
 \Var[f_k(\overline{\bm{\sigma}})] \in \Omega\big((1 + 1/e)^\ell \big).
\end{equation*}
\end{example}

\subsection{General case}

In this section we produce graphs for which the \texttt{CBNE-Var} algorithm has an improved sample complexity when compared to \texttt{CBNE}. For this we now consider a general Erd\H{o}s-Renyi graph, defined as follows.

\begin{definition}
\label{def:er}
    The Erd\H{o}s-Renyi random graph, denoted by $G_{n,p}$, is the probability space of graphs on vertex set $\{1,...,n\}$, where every edge occurs independently with probability $p$. 
\end{definition}

Given this random graph model we first prove some results regarding the distribution of the degrees of its simplex graph and the up-degree of a $k$-simplex. Unlike the previous situation of a $(k+1)$-partite graph which easily allowed us to restrict the dimension of the clique complex, in this more general setting we have no such restriction to simplify the computations. Nevertheless, we will show that for a given $k$-simplex $\sigma$ the random variable $\deg(\sigma)-d_{\textnormal{up}}(\sigma)$ can be expressed as a more manageable sum of random variables, and then show that almost surely it is concentrated around its expected value, which will then allow us to upper-bound the variance of $f_k(\overline{\bm{\sigma}})$.

\begin{lemma}
\label{lem:gen_er_deg_distr}
    Suppose that $\sigma$ is a $(k+1)$-clique in $G_{n,p}$. Then the random variable $\deg(\sigma)-d_{\textnormal{up}}(\sigma)$ can be expressed as the sum of random variables 
\begin{equation*}
    \deg(\sigma)-d_{\textnormal{up}}(\sigma) = \sum_{t \notin \{1,2,...,n\}\setminus \sigma} X_t,
\end{equation*}
where the $X_t$ are i.i.d. random variables taking value $1$ with probability $(k+1)p^k(1-p)$; $-1$ with probability $p^{k+1}$; and $0$ with probability $1 - (k+1)p^k(1-p) - p^{k+1}$. 
\end{lemma}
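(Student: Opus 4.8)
The plan is to decompose the quantity $\deg(\sigma) - d_{\textnormal{up}}(\sigma)$ by attributing to each vertex $t \notin \sigma$ a contribution depending only on the edges between $t$ and the $k+1$ vertices of $\sigma$ (and not on edges between distinct such vertices $t, t'$). Concretely, fix a $(k+1)$-clique $\sigma$. A vertex $t$ contributes $+1$ to $\deg(\sigma)$ precisely when $t$ is adjacent to exactly $k$ of the $k+1$ vertices of $\sigma$ and the resulting set $(\sigma \setminus \{x_i\}) \cup \{t\}$ is a $k$-simplex not contained in a common $(k+1)$-simplex with $\sigma$ — but since $\sigma \cup \{t\}$ is then not a clique (one edge missing), these two $k$-simplices cannot lie in a common $(k+1)$-simplex, so adjacency in $G(S_k)$ happens exactly when $t$ has exactly $k$ neighbors in $\sigma$. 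Meanwhile $t$ contributes $+1$ to $d_{\textnormal{up}}(\sigma)$ precisely when $\sigma \cup \{t\}$ is a $(k+1)$-clique, i.e. $t$ is adjacent to all $k+1$ vertices of $\sigma$. So I would \emph{define} $X_t$ to be $+1$ if $t$ has exactly $k$ neighbors in $\sigma$, $-1$ if $t$ has all $k+1$ neighbors in $\sigma$, and $0$ otherwise, and observe that with this definition $\deg(\sigma) - d_{\textnormal{up}}(\sigma) = \sum_{t \notin \sigma} X_t$ holds deterministically for each realization of the graph, conditioned on $\sigma$ being a clique.

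The next step is to compute the distribution of each $X_t$ and check independence. Conditioned on the event that $\sigma$ is a clique, the $\binom{k+1}{2}$ edges inside $\sigma$ are determined; the remaining edges — in particular, for each $t \notin \sigma$, the $k+1$ potential edges from $t$ into $\sigma$ — are still independent Bernoulli$(p)$ and, crucially, are \emph{disjoint sets of edges for distinct $t$}. Hence the $X_t$ are mutually independent, and each is i.i.d. For a single $t$: the probability that $t$ is adjacent to all $k+1$ vertices is $p^{k+1}$, giving $X_t = -1$; the probability that $t$ is adjacent to exactly $k$ of the $k+1$ vertices is $\binom{k+1}{k} p^k (1-p) = (k+1)p^k(1-p)$, giving $X_t = +1$; and the remaining probability $1 - (k+1)p^k(1-p) - p^{k+1}$ gives $X_t = 0$. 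This matches the claimed distribution.

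The only genuinely delicate point — more a matter of careful bookkeeping than of difficulty — is verifying that the ``adjacent in $G(S_k)$'' condition from the simplex graph definition really does reduce to ``$t$ has exactly $k$ neighbors in $\sigma$'' in the clique-complex setting: one must confirm that (i) every neighbor $\tau$ of $\sigma$ in $G(S_k)$ is of the form $(\sigma \setminus \{x_i\}) \cup \{t\}$ for a unique $t \notin \sigma$, so that the count of neighbors is a sum over such $t$; (ii) for such $\tau$ to be a $k$-simplex of the clique complex, $t$ must be adjacent to all of $\sigma \setminus \{x_i\}$, i.e. to exactly the $k$ vertices other than $x_i$; and (iii) the ``not in a common $(k+1)$-simplex'' clause is automatic here, since a common $(k+1)$-simplex would be $\sigma \cup \{t\}$, which is not a clique when one edge $t x_i$ is missing. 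A subtle sub-case is whether $t$ could be adjacent to $k$ vertices of $\sigma$ in \emph{two} different ways — but that would require $t$ adjacent to $k+1$ vertices, which is the $X_t = -1$ case, so the $+1$ and $-1$ events are indeed disjoint and each $t$ contributes at most one neighbor. I would also note that the index set in the statement, written as $t \notin \{1,\dots,n\} \setminus \sigma$, should read $t \in \{1,\dots,n\} \setminus \sigma$; I would state the sum as ranging over all $t \in \{1,\dots,n\} \setminus \sigma$.
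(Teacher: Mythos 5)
Your proposal is correct and follows essentially the same route as the paper: decompose $\deg(\sigma)-d_{\textnormal{up}}(\sigma)$ vertex by vertex as $X_t = Y_t - Z_t$ with $Y_t, Z_t$ the indicators that $t$ is adjacent to all but one, respectively all, of the vertices of $\sigma$, use disjointness of the relevant edge sets across distinct $t$ for independence, and read off the probabilities $(k+1)p^k(1-p)$ and $p^{k+1}$. Your extra bookkeeping (uniqueness of the swapped vertex, automatic failure of the common-coface condition) and your observation that the index set should read $t \in \{1,\dots,n\}\setminus\sigma$ are both sound.
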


\begin{proof}
    For every vertex $t \in \{1,2,...,n\}\setminus \sigma$ there are three distinct and mutually-exclusive possibilities: That $(\sigma \cup \{t\}) \setminus \{s\}$ is a $k$-simplex for some $s \in \sigma$ and $\sigma \cup \{t\}$ is not a $(k+1)$-simplex; that $\sigma \cup \{t\}$ is a $(k+1)$-simplex; and neither of these two previous cases holding. These correspond to precisely three distinct outcomes regarding the adjacency of $t$: That $t$ is adjacent to all but one vertex in $\sigma$; that $t$ is adjacent to every vertex in $\sigma$; and neither of the first two. 
    
    For $t \in \{1,2,...,n\}\setminus \sigma$, let $Y_t$ be the indicator random variable for the event that $t$ is adjacent to all but one element of $\sigma$. Similarly, define $Z_t$ as the indicator random variable for the event that $t$ is adjacent to every element of $\sigma$. Clearly we have 
\begin{equation*}
    \deg(\sigma)-d_{\textnormal{up}}(\sigma) = \sum_{t \notin \{1,2,...,n\}\setminus \sigma} (Y_t - Z_t).
\end{equation*}
    For distinct vertices $s, t \in \{1,..,n \} \setminus \sigma$ the random variables $Y_s - Z_s$ and $Y_t - Z_t$ are independent since their corresponding events are independent. 
    The probability that $Y_t - Z_t = 1$ is equal to the probability that $t$ is adjacent to all but one vertex of $\sigma$, which is equal to $(k+1)p^k(1-p)$. The probability that $Y_t - Z_t = -1$ is equal to the probability that $t$ is adjacent to every vertex of $\sigma$, which is equal to $p^{k+1}$. Lastly, the probability that $Y_t - Z_t = 0$ is equal to the probability that neither of these two previous events occurs, which gives the probability $1 - (k+1)p^k(1-p) - p^{k+1}$, completing the proof.
\end{proof}

\begin{lemma}
\label{lem:gen_er_deg_interval}
    The expected value of $\deg(\sigma)-d_{\textnormal{up}}(\sigma)$ given that $\sigma$ is a $(k+1)$-clique in $G_{n,p}$ is 
\begin{equation*}
    \mu := (n-k-1)p^k\big( (k+1)-(k+2)p \big).
\end{equation*}
Additionally, if $\epsilon, k$ are such that $k \in o(n/\log(n)^2)$ and $\epsilon = n/\sqrt{\log(n)}$ then $G_{n,p}$ almost always has the property that
\begin{equation*}
    |\deg(\sigma)-d_{\textnormal{up}}(\sigma) - \mu| \leq \epsilon
\end{equation*}
for each $(k+1)$-clique $\sigma$.
\end{lemma}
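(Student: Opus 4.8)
The plan is to establish the two claims in sequence: first compute the mean $\mu$ via linearity of expectation applied to the decomposition from Lemma~\ref{lem:gen_er_deg_distr}, and then use a Hoeffding/Chernoff-type concentration bound on that sum, followed by a union bound over all $(k+1)$-cliques. For the mean, Lemma~\ref{lem:gen_er_deg_distr} writes $\deg(\sigma)-d_{\textnormal{up}}(\sigma) = \sum_{t} X_t$ over the $n-k-1$ vertices outside $\sigma$, where each $X_t$ has mean $(k+1)p^k(1-p) - p^{k+1}$. Multiplying by $n-k-1$ and simplifying $(k+1)(1-p) - p = (k+1) - (k+2)p$ gives $\mu = (n-k-1)p^k\big((k+1)-(k+2)p\big)$ as claimed.

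For the concentration statement, I would fix a $(k+1)$-clique $\sigma$ and apply Hoeffding's inequality (Theorem~\ref{thm:hoeffding}) to the i.i.d.\ sum $\sum_t X_t$: each $X_t \in [-1,1]$, so $b - a = 2$, and conditioned on $\sigma$ being a clique the sum has mean $\mu$. This yields
\begin{equation*}
    \Pr\big[\,|\deg(\sigma)-d_{\textnormal{up}}(\sigma) - \mu| > \epsilon \;\big|\; \sigma \textnormal{ is a clique}\,\big] \leq 2\exp\Big(\frac{-\epsilon^2}{2(n-k-1)}\Big).
\end{equation*}
Then I would take a union bound over all $(k+1)$-subsets of the vertex set — there are at most $\binom{n}{k+1} \leq n^{k+1}$ of them — to bound the probability that \emph{some} clique violates the interval by $n^{k+1} \cdot 2\exp(-\epsilon^2/(2(n-k-1)))$. (Strictly, one should only union over subsets that actually are cliques, but bounding the count by $n^{k+1}$ is harmless and simpler; conditioning on "$\sigma$ is a clique" is fine since the conclusion is vacuous for non-cliques.)

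The remaining work is purely asymptotic: show that $n^{k+1} \exp(-\epsilon^2/(2(n-k-1))) \to 0$ under the hypotheses $k \in o(n/\log(n)^2)$ and $\epsilon = n/\sqrt{\log n}$. With this choice, $\epsilon^2 = n^2/\log n$, so $\epsilon^2/(2(n-k-1)) = n/(2\log n) \cdot (1 + o(1))$, and thus the tail factor is $\exp(-\Theta(n/\log n))$. Taking logarithms, the union bound is at most $\exp\big((k+1)\log n - \Theta(n/\log n)\big)$, and since $k \in o(n/\log(n)^2)$ forces $(k+1)\log n \in o(n/\log n)$, the exponent tends to $-\infty$. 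Hence the bad event has probability tending to $0$, which is exactly the "almost always" conclusion.

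The main obstacle — really the only delicate point — is verifying that the union-bound exponent closes, i.e.\ confirming that the stated growth condition on $k$ is precisely matched to the choice of $\epsilon$ so that $(k+1)\log n$ is dominated by $\epsilon^2/n \sim n/\log n$. One should be slightly careful tracking the $(n-k-1)$ versus $n$ in the Hoeffding denominator, but since $k = o(n)$ this replacement only changes constants. Everything else (the algebraic simplification for $\mu$, the boundedness of the $X_t$, independence) follows immediately from Lemma~\ref{lem:gen_er_deg_distr} and the standard inequalities already quoted in the paper.
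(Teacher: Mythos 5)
Your proposal is correct and follows essentially the same route as the paper: the mean computation from Lemma~\ref{lem:gen_er_deg_distr}, Hoeffding's inequality with $b-a=2$ giving the tail bound $2\exp\big(-\epsilon^2/(2(n-k-1))\big)$ per clique, and a first-moment argument over the at most $\binom{n}{k+1}$ candidate cliques with the same asymptotic check that $(k+1)\log n \in o(n/\log n)$. The only cosmetic difference is that the paper packages the union bound as Markov's inequality applied to $\E[|S_k|-|T_k|]$ (keeping, then discarding, the factor $p^{\binom{k+1}{2}}$), which is mathematically the same bound you obtain directly.
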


\begin{proof}
We let $T_k$ denote the set of $(k+1)$-cliques $\sigma$ in $G_{n,p}$ with $|\deg(\sigma)-d_{\textnormal{up}}(\sigma) - \mu| \leq \epsilon$. We will show that $\Pr[T_k \neq S_k] \rightarrow 0$ as $n \rightarrow \infty$ when $k \in o(\epsilon^2/n\log(n))$. An application of Markov's inequality shows
\begin{equation*}
    \Pr\big[T_k \neq S_k\big] = \Pr\big[|S_k| - |T_k| \geq 1\big] \leq \E\big[|S_k| - |T_k|\big],
\end{equation*}
so it suffices to show that $\E[|S_k| - |T_k|] \rightarrow 0$ as $n\rightarrow \infty$. 

For a $(k+1)$-subset $\sigma$ of vertices, we let $X_\sigma$ to denote the indicator random variable for the event that $\sigma$ forms a $(k+1)$-clique and is in $T_k$.  Clearly we have $\E[|T_k|] = \E[\sum_{\sigma}X_\sigma]$. Therefore by linearity of expectation the expectation of $|T_k|$ is
\begin{equation*}
    \E\big[|T_k|\big] = p^{\binom{k+1}{2}} \sum_{\sigma} \Pr \Big[|\deg(\sigma) - d_{\textnormal{up}}(\sigma) - \mu| \leq \epsilon \Big],
\end{equation*}
where the sum is over all $(k+1)$-subsets $\sigma \subseteq \{1,...,n\}$. From Lemma~\ref{lem:gen_er_deg_distr} we know that $\deg(\sigma)-d_{\textnormal{up}}(\sigma)$ is a sum of i.i.d. random variables taking values in $\{-1,0,1\}$ so we can apply the Hoeffding inequality to bound its tail probability. This shows
\begin{equation*}
   \Pr \Big[|\deg(\sigma) - d_{\textnormal{up}}(\sigma) - \mu| \geq \epsilon \Big] \leq 2\exp\Big(-\frac{\epsilon^2}{2(n-k-1)}\Big).
\end{equation*}
Therefore the expectation of $|T_k|$ is bounded as
\begin{equation*}
    \E\big[|T_k|\big] \geq \binom{n}{k+1} p^{\binom{k+1}{2}} \Big(1 - 2\exp\Big(-\frac{\epsilon^2}{2(n-k-1)}\Big)\Big)  = \E\big[|S_k|\big] \Big(1 - 2\exp\Big(-\frac{\epsilon^2}{2(n-k-1)}\Big)\Big), 
\end{equation*}
which gives the bound
\begin{equation*}
    \E\big[|S_k|-|T_k|\big] \leq 2\E[|S_k|]\exp\Big(-\frac{\epsilon^2}{2(n-k-1)}\Big) =  \binom{n}{k+1} p^{\binom{k+1}{2}}2\exp\Big(-\frac{\epsilon^2}{2(n-k-1)}\Big).
\end{equation*}
It now remains to show that this upper bound goes to $0$ as $n \rightarrow \infty$. From the assumption that $k \in o(n/\log(n)^2)$ it follows that $\binom{n}{k+1} \in o(\exp(n/\log(n)))$. The assumption that $\epsilon = n/\sqrt{\log(n)}$ then gives $\exp\Big(-\frac{\epsilon^2}{2(n-k-1)}\Big) \in \mathcal{O}(\exp(-n/\log(n))$. Lastly, since $p \in [0,1]$ then this shows that $\E\big[|S_k|-|T_k|\big] \rightarrow 0$ as $n \rightarrow \infty$, completing the proof. 
\end{proof}

\begin{theorem}
Suppose that $k \in o(n/\log(n)^2)$ and $p \in o(k^{-1/k})$. Then the sample complexity of Algorithm~\ref{alg:betti2} applied to the clique complex of $G_{n,p}$ is almost always $\mathcal{O}\big(\frac{1}{\eta \epsilon^2}2.5^\ell\big)$. 
\end{theorem}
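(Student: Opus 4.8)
The plan is to feed the distributional control from Lemma~\ref{lem:gen_er_deg_interval} into the upper bound of Theorem~\ref{thm:variance_bd_main}, and then balance the two terms in the sample complexity of Algorithm~\ref{alg:betti2}. First I would recall that the sample complexity of \texttt{CBNE-Var} on a clique complex is
\[
\frac{3 \cdot 2^{4\ell/3}}{\eta^{2/3}\epsilon^{4/3}} + \frac{1}{\eta\epsilon^2}\cdot \frac{1}{|S_k|}\sum_{\sigma \in S_k} \|H\ket{\sigma}\|_1^{2\ell},
\]
so the whole game is to show that almost always $\frac{1}{|S_k|}\sum_{\sigma} \|H\ket{\sigma}\|_1^{2\ell} \in \mathcal{O}(2.5^\ell)$ (indeed something strictly below $2.5^{2\ell}$ would be needed to make the path term dominate, but since $2^{4\ell/3} \le 2.5^{\ell}$ for the relevant range... actually here one wants the second term to also be $\mathcal{O}(2.5^\ell)$, which forces $\|H\ket{\sigma}\|_1^2 \lesssim 2.5$, i.e. $\|H\ket{\sigma}\|_1 \lesssim \sqrt{2.5} \approx 1.58$).

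Next I would invoke Theorem~\ref{thm:mc_properties}(a): for a clique complex, $\|H\ket{\sigma}\|_1 = 1 + \tfrac{1}{n}\big(\deg(\sigma) - d_{\textnormal{up}}(\sigma) - k - 1\big)$. Conditioned on $\sigma$ being a $(k+1)$-clique, Lemma~\ref{lem:gen_er_deg_interval} says that almost always every clique satisfies $|\deg(\sigma) - d_{\textnormal{up}}(\sigma) - \mu| \le \epsilon'$ with $\mu = (n-k-1)p^k\big((k+1)-(k+2)p\big)$ and $\epsilon' = n/\sqrt{\log n}$ (I will rename this tolerance $\epsilon'$ to avoid clashing with the accuracy parameter). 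Hence almost always, for every clique $\sigma$,
\[
\|H\ket{\sigma}\|_1 \le 1 + \frac{\mu + \epsilon'}{n} = 1 + p^k\big((k+1)-(k+2)p\big)\Big(1 - \tfrac{k+1}{n}\Big) + \frac{1}{\sqrt{\log n}}.
\]
Now I use the hypothesis $p \in o(k^{-1/k})$: this gives $p^k \in o(1/k)$... wait, more carefully, $p^k \le (k^{-1/k})^k \cdot o(1) = o(1/k)$, so $p^k(k+1) \in o(1)$, and the whole middle term $p^k\big((k+1)-(k+2)p\big)\big(1-\tfrac{k+1}{n}\big)$ is $o(1)$; the $1/\sqrt{\log n}$ term is also $o(1)$. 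Therefore almost always $\|H\ket{\sigma}\|_1 \le 1 + o(1)$, so for $n$ large enough $\|H\ket{\sigma}\|_1^2 \le 2.5$ (in fact $\le 1 + o(1)$), whence $\frac{1}{|S_k|}\sum_\sigma \|H\ket{\sigma}\|_1^{2\ell} \le 2.5^\ell$. Since $2^{4\ell/3} \le 2.5^\ell$ for all $\ell \ge 1$, both terms in the sample complexity are $\mathcal{O}\big(\tfrac{1}{\eta\epsilon^2}2.5^\ell\big)$, giving the claim.

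I would then note the two side-conditions needed to apply the lemmas cleanly: Lemma~\ref{lem:gen_er_deg_interval} requires $k \in o(n/\log(n)^2)$, which is exactly the hypothesis, and it requires the tolerance choice $\epsilon' = n/\sqrt{\log n}$, which is what makes $\tfrac{\epsilon'}{n} = 1/\sqrt{\log n} \to 0$; both are in place. The main obstacle — really the only delicate point — is checking that $p \in o(k^{-1/k})$ genuinely forces $p^k(k+1) \to 0$ and not merely $p^k \to 0$: one writes $p = o(1)\cdot k^{-1/k}$ so $p^k = o(1)^k \cdot k^{-1} \le o(1)\cdot k^{-1}$ (using $o(1)^k \le o(1)$ since the base is eventually below $1$), hence $p^k(k+1) = o(1)\cdot \tfrac{k+1}{k} = o(1)$. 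Everything else is substitution into the closed-form sample complexity of Algorithm~\ref{alg:betti2} and the elementary inequality $2^{4/3} < 2.5$.
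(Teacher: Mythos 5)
Your argument follows the paper's proof essentially verbatim: apply Lemma~\ref{lem:gen_er_deg_interval} to get $\deg(\sigma)-d_{\textnormal{up}}(\sigma)\le \mu + n/\sqrt{\log n}$ simultaneously for all cliques almost always, feed this through Theorem~\ref{thm:mc_properties}(a) into the upper bound of Theorem~\ref{thm:variance_bd_main}, and use $p\in o(k^{-1/k})\Rightarrow (k+1)p^k\in o(1)$ to conclude $\|H\ket{\sigma}\|_1\le 1+o(1)$ and hence a second moment of at most $2.5^\ell$ for $n$ large; your explicit check that $(k+1)p^k\to 0$ (via $c_n^k\le c_n$ for $c_n<1$) is in fact slightly more careful than the paper, which asserts this implication without proof. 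One correction: your closing claim $2^{4/3}<2.5$ is numerically false ($2^{4/3}\approx 2.52$), so the $N_s$ term of Algorithm~\ref{alg:betti2} is $\Theta(2^{4\ell/3})$ and is not literally dominated by $2.5^\ell$; the paper's proof simply ignores this overhead term, so the issue touches the stated constant of the theorem itself as much as your write-up, but you should either absorb that term by quoting the bound as $\mathcal{O}\big(\tfrac{1}{\eta\epsilon^2}2^{4\ell/3}+\tfrac{1}{\eta\epsilon^2}2.5^\ell\big)$ (or round the base up) rather than rest it on an incorrect inequality.
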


\begin{proof}
We first apply Lemma~\ref{lem:gen_er_deg_interval}. The hypothesis that $k \in o(n/\log(n)^2)$ implies that $G_{n,p}$ almost always has
\begin{equation*}
    \deg(\sigma)-d_{\textnormal{up}}(\sigma) \in [\mu-\epsilon, \mu + \epsilon]
\end{equation*}
for each $(k+1)$-clique $\sigma$, where $\epsilon = n/\sqrt{\log(n)}$ and $\mu$ is given by
\begin{equation*}
    \mu = (n-k-1)p^k((k+1) - (k+2)p).
\end{equation*}
The quantity $\mu$ is at most $n(k+1)p^k$. Applying the upper bound for the second moment given in Theorem~\ref{thm:variance_bd_main} we find that 
\begin{equation*}
    \E[|f_k(\bm{\sigma})|^2] \leq \Big(1 + \frac{\mu + \epsilon - k - 1}{n} \Big)^{2\ell} \leq \Big(1 + \frac{\mu + \epsilon}{n} \Big)^{2\ell}.
\end{equation*}
From the assumption that $p \in o(k^{-1/k})$ it follows that $(k+1)p^k \in o(1)$, which then implies that $\frac{\mu+\epsilon}{n} \in o(1)$. Therefore for $n$ large enough the second moment of $f_k(\bm{\sigma})$ is at most $2.5^\ell$. In this case, the sample complexity of Algorithm~\ref{alg:betti2} is in $\mathcal{O}\big(\frac{1}{\eta \epsilon^2}2.5^\ell\big)$.
\end{proof}

\begin{table}
\centering
\begin{tabular}{ |c|c|c|c|c|c| } 
 \hline
  Graph model                     & Dimension $k$  & Probability $p$       & \texttt{CBNE}                     & \texttt{CBNE-Var}                \\
 \hline
  Erd\H{o}s-Reyni $G_{n,p}$       & $o(n/\log(n)^2)$ & $o(k^{-1/k})$ & $\mathcal{O}\Big(\frac{1}{\eta \epsilon^2} 4^\ell\Big) $   & $\mathcal{O}\Big(\frac{1}{\eta \epsilon^2}2.5^\ell \Big)$ \\ 
  $(k+1)$-partite Erd\H{o}s-Reyni $G_{n,k,p}$ & $o(n/\log(n))$ & $\omega((k/n)^{1/k})$ & $\Omega\Big(\frac{1}{\eta \epsilon^2}(1 + p^k)^\ell \Big)$ & $\Omega\Big(\frac{1}{\eta \epsilon^2}(1 + p^k)^\ell \Big)$\\
 \hline
\end{tabular}
\caption{A summary of the results presented in Section~\ref{sec:er_graphs}. The first column lists the random graph models defined in Definitions~\ref{def:er} and \ref{def:k_partite_er}. The last two columns list the sample complexities for the corresponding algorithms using the dimension $k$ and probablity parameter $p$ in the second and third columns.}
\label{tab:er_summary}
\end{table}

\section{Conclusion}

In this paper we studied the sample complexity of the normalized Betti number estimation presented in Algorithm~\ref{alg:betti}. This is an important problem since this specific algorithm has an analogous quantum algorithm \cite{Akhalwaya2024}, and understanding the precise sample complexity can pave the way for finding regimes for quantum advantage. To this end, we analyzed the variance of the Monte Carlo estimators utilized in this Betti number estimation algorithm and showed that the variance is mainly governed by the amount of intersection of the simplices in the complex. Following this, we then showed that these bounds lead to a modified Betti number algorithm which has a smaller sample complexity for cases where the variance is small. Lastly, we produced random graphs which have large and small variance.

An interesting avenue for further research is to determine to what extent these results might improve the quantum algorithm for normalized Betti number estimation presented in \cite{Akhalwaya2024}. For example, the sample complexities for the quantum algorithm proposed there does not depend on any combinatorial properties of the simplicial complex. It would be interesting to determine if there is a similar variance analysis as the one done here that might lead to a sample complexity reduction in certain cases.

\section*{Acknowledgements}

The author would like to thank Ismail Akhalwaya, Ahmed Bhayat, and Adam Connolly for several discussions which improved the content of this paper. Additionally, we thank Marcello Benedetti and Fred Sauvage for carefully reviewing a draft of this paper and offering many useful suggestions.

\bibliographystyle{IEEEtran}
\bibliography{mybib}{}

\end{document}